\DeclareMathOperator{\hada}{Hada} 
\DeclareMathOperator{\diag}{diag} 
\newcommand{\splitatcommas}[1]{%
	\begingroup
	\begingroup\lccode`~=`, \lowercase{\endgroup
		\edef~{\mathchar\the\mathcode`, \penalty0 \noexpand\hspace{0pt plus 1em}}%
	}\mathcode`,="8000 #1%
	\endgroup
}
\newtheorem{fact}{\bf Fact}
\begin{document}
	
\mainmatter          

\title{On the Counting of Involutory MDS Matrices}
%

\author{Susanta Samanta~\thanks{Part of this work was conducted during the author's affiliation with the Applied Statistics Unit, Indian Statistical Institute, India.}}


\authorrunning{~}

\institute{
	Department of Electrical and Computer Engineering\\
	University of Waterloo\\
	Waterloo, Ontario, N2L 3G1, Canada \\ 
	\email{ssamanta@uwaterloo.ca}
}
\maketitle

\begin{abstract}
    \sloppy
	The optimal branch number of MDS matrices has established their importance in designing diffusion layers for various block ciphers and hash functions. As a result, numerous matrix structures, including Hadamard and circulant matrices, have been proposed for constructing MDS matrices. Also, in the literature, significant attention is typically given to identifying MDS candidates with optimal implementations or proposing new constructions across different orders. However, this paper takes a different approach by not emphasizing efficiency issues or introducing new constructions. Instead, its primary objective is to enumerate Hadamard MDS and involutory Hadamard MDS matrices of order $4$ within the field $\mathbb{F}_{2^r}$. Specifically, it provides an explicit formula for the count of both Hadamard MDS and involutory Hadamard MDS matrices of order $4$ over $\mathbb{F}_{2^r}$. Additionally, it derives the count of Hadamard Near-MDS (NMDS) and involutory Hadamard NMDS matrices, each with exactly one zero in each row, of order $4$ over $\mathbb{F}_{2^r}$. Furthermore, the paper discusses some circulant-like matrices for constructing NMDS matrices and proves that when $n$ is even, any $2n \times 2n$ Type-II circulant-like matrix can never be an NMDS matrix. While it is known that NMDS matrices may be singular, this paper establishes that singular Hadamard matrices can never be NMDS matrices. Moreover, it proves that there exist exactly two orthogonal Type-I circulant-like matrices of order $4$ over $\mathbb{F}_{2^r}$.

	\keywords{Finite Field \and Involutory matrix \and Hadamard matrix \and Circulant matrix \and Circulant-like matrix \and MDS matrix \and NMDS matrix.}
\end{abstract}


\section{Introduction}
Claude Shannon, in his paper ``Communication Theory of Secrecy Systems'' \cite{SHANON}, introduced the concepts of confusion and diffusion, which play a significant role in the design of symmetric key cryptographic primitives.
The concept of confusion aims to create a statistical relationship between the ciphertext and message that is too intricate for an attacker to exploit. This is accomplished through the use of nonlinear functions such as Sboxes and Boolean functions. Diffusion, on the other hand, ensures that each bit of the message and secret key influences a significant number of bits in the ciphertext, and over several rounds, all output bits depend on every input bit.

\sloppy Optimal diffusion layers can be achieved by employing \textit{MDS matrices} with the highest branch numbers. As a result, various matrix structures have been suggested for the designing of MDS matrices, including \textit{Hadamard} and \textit{circulant} matrices. A concise survey on the various theories on the construction of MDS matrices is provided in~\cite{MDS_Survey}. In the context of lightweight cryptographic primitives, the adoption of \textit{involutory matrices} allows for the implementation of both encryption and decryption operations using identical circuitry, thereby resulting in an equivalent implementation cost for both processes. So, it is of special interest to find efficient MDS matrices which are also involutory.

Generally, three techniques are used to generate involutory MDS matrices. The first technique, known as direct construction, uses algebraic methods to produce an involutory MDS matrix of any order without requiring a search process. The second approach relies on search-based methods, while the third combines these two strategies into a hybrid methodology. Direct constructions are typically derived from Cauchy and Vandermonde matrices, along with their extensions. Although direct construction methods enable the generation of MDS matrices of any order, they do not guarantee matrices with optimal hardware efficiency, even for smaller sizes. In contrast, the search-based technique is currently the only approach that can yield an MDS matrix with the best possible hardware area. However, this method is feasible only when the field size and matrix size remain small. A concise overview of both direct and search-based constructions of MDS matrices, including their involutory properties, can be found in~\cite{MDS_Survey}.


In the hybrid approach, a representative involutory MDS matrix is generally obtained by search. This matrix then serves as a basis to generate multiple involutory MDS matrices. The concept of a representative matrix for involutory MDS matrices first appeared in \cite{GUZEL2019}, where a form was presented that generates all $3 \times 3$ representative involutory MDS matrices using two parameters. However, it is noteworthy that this paper did not apply a hybrid method with representative matrices; instead, it provided an explicit form of a $3 \times 3$ involutory MDS matrix over $\mathbb{F}_{2^r}$. From this form, all $3 \times 3$ involutory MDS matrices can be derived by varying parameters over $\mathbb{F}_{2^r}$ under certain conditions. The authors further demonstrated that there are $(2^r-1)^2 (2^r-2)(2^r-4)$ involutory MDS matrices of size $3 \times 3$ over the finite field $\mathbb{F}_{2^r}$. In \cite{GHADA}, the authors introduced a hybrid method to efficiently generate involutory MDS matrices of size $2^n \times 2^n$. This approach introduced the GHadamard matrix, a new matrix form that enables the generation of new involutory MDS matrices using an involutory Hadamard MDS matrix as a representative matrix. In \cite{SakalliAutomorphism}, a technique was presented for producing new $n \times n$ involutory MDS matrices isomorphic to pre-existing ones.

In a recent work \cite{Kumar_MDS2024}, the authors demonstrated a hybrid approach to generate all $n \times n$ involutory MDS matrices over $\mathbb{F}_{2^r}$, resulting in a search space of $2^{(n-1)^2r}$. Other recent studies, such as \cite{Kumar_4MDS} and \cite{Tuncay23}, proposed hybrid methods for $4 \times 4$ involutory MDS matrices over $\mathbb{F}_{2^r}$. In \cite{Tuncay23}, the search space for finding representative involutory MDS matrices of order 4 was reduced significantly from $2^{16r}$ to $2^{8r}$. Subsequently, \cite{Kumar_4MDS} further optimized this search space to $2^{5r}$ for finding representative involutory MDS matrices of order 4 over $\mathbb{F}_{2^r}$. Then, by exhaustive search in the reduced search space, the authors provided the count of $4 \times 4$ involutory MDS matrices over $\mathbb{F}_{2^r}$ for $r = 3, 4, \ldots, 8$. In addition to the hybrid approach, there are studies focusing on the hardware area and latency of $4 \times 4$ involutory MDS matrices. Notable examples from the literature include references~\cite{Bai2020,Li2019FSE,Meltem2023,Shi_2023AMC,Yang_Zeng_Wang_2021,Zhao2024}. However, there is currently no exact form of a $4 \times 4$ matrix that offers a direct construction method for generating all involutory MDS matrices of order $4$ over any finite field of characteristic 2, as exists for $3\times 3$ involutory matrices in~\cite{GUZEL2019}.

One of the most noteworthy advantages of Hadamard matrices lies in their capability to facilitate the construction of involutory matrices. If the matrix elements are selected such that the first row sums to one, the resultant matrix attains involutory properties~\cite{MDS_Survey}. Due to this advantageous characteristic, several block ciphers, such as Anubis ~\cite{ANUBIS}, Khazad~\cite{KHAZAD} and CLEFIA ~\cite{CLEFIA}, have incorporated involutory Hadamard MDS matrices into their diffusion layers.

It is noteworthy that in the literature, most work often prioritizes identifying MDS candidates with optimal implementations or provides direct (theoretical) constructions of MDS matrices of any order. Examples from the literature include references~\cite{Gupta2023direct,LACAN2003,CYCLICM,GHADA,ROTH2,V_MDS,PXOR1}. However, this paper does not focus on efficiency issues nor propose new constructions. Instead, its primary aim is to enumerate Hadamard MDS and involutory Hadamard MDS matrices of order $4$ within the field $\mathbb{F}_{2^r}$. Specifically, it proves that the counts of $4\times 4$ Hadamard MDS and $4\times 4$ involutory Hadamard MDS matrices over $\mathbb{F}_{2^r}$ are $(2^r-1)(2^r-2)(2^r-4)(2^r-7)$ and $(2^r-2)(2^r-4)(2^r-7)$ respectively. Additionally, it derives a general form for $2n\times 2n$ involutory MDS matrices and using this it provides counts for both order $2$ MDS matrices and order $2$ involutory MDS matrices in the field $\mathbb{F}_{2^r}$. Finally, leveraging these counts of order $2$ matrices, the paper establishes an upper limit for the number of all involutory MDS matrices of order $4$ over $\mathbb{F}_{2^r}$.

In~\cite{Combo_NMDS_2023}, the authors studied various matrix structures, including Hadamard, circulant, and left-circulant matrices, in the context of constructing \textit{Near-MDS (NMDS) matrices}. However, some circulant-like matrices have yet to be analyzed in relation to NMDS matrices. Additionally, \cite{GR15} demonstrated that when $n$ is even, the corresponding $2n \times 2n$ Type-II circulant-like matrix cannot be MDS. However, whether this holds true for NMDS matrices remains unknown. This paper aims to address these gaps. Specifically, in Theorem~\ref{Th_TypeII_NMDS}, we prove that for even $n$, a Type-II circulant-like matrix of order $2n$ can never be an NMDS matrix. We also show that there are exactly two orthogonal Type-I circulant-like matrices of order $4$ over $\mathbb{F}_{2^r}$. Furthermore, as established in~\cite{Combo_NMDS_2023}, an NMDS matrix can be singular. In this context, Theorem~\ref{Th_Hadamard_singular} demonstrates that a singular Hadamard matrix can never be an NMDS matrix over $\mathbb{F}_{2^r}$. Finally, we prove that the number of $4 \times 4$ Hadamard NMDS and involutory Hadamard NMDS matrices $\hada(a_1, a_2, a_3, a_4)$, with exactly one $a_i = 0$, over $\mathbb{F}_{2^r}$ is given by $4(2^r - 1)(2^{2r}-3\cdot 2^r+3)$ and $4(2^{2r}-3\cdot 2^r+3)$, respectively.
\vspace{1em}

\noindent The structure of the paper is as follows. In Section~\ref{Sec:Definition}, we provide an overview of the mathematical background and notations used throughout the paper. Section~\ref{Sec:4Hadamard_MDS} is dedicated to deriving the count of $4\times 4$ Hadamard MDS and involutory Hadamard MDS matrices. In Section~\ref{Sec:general_2n_MDS}, we present a general form for $2n\times 2n$ involutory MDS matrices and use it to derive the count of involutory MDS matrices of order 2. Section~\ref{Sec:Near-MDS} investigates NMDS matrices by leveraging Hadamard and circulant-like matrices. Finally, Section~\ref{Sec:Conclusion} concludes the paper.

\section{Definition and Preliminaries}\label{Sec:Definition}
Let $\mathbb{F}_2=\{0,1\}$ be the finite field of two
elements, $\mathbb{F}_{2^r}$ be the finite field of $2^r$ elements and $\FF_{2^r}^\star$ be the multiplicative group of $\FF_{2^r}$. The set of vectors of length $n$ with entries from the finite field $\mathbb{F}_{2^r}$ is denoted by $\mathbb{F}_{2^r}^n$.

A matrix $D$ of order $n$ is said to be diagonal if $(D)_{i,j}=0$ for $i\neq j$. Using the notation $d_i = (D)_{i,i}$, the diagonal matrix $D$ can be represented as $\diag(d_1, d_2, \ldots, d_n)$. It is evident that the determinant of $D$ is given by $\det(D) = \prod_{i=1}^{n} d_i$. Therefore, the diagonal matrix $D$ is nonsingular over $\mathbb{F}_{2^r}$ if and only if $d_i \neq 0$ for $1 \leq i \leq n$.


An MDS matrix provides diffusion properties that have practical applications in the field of cryptography. This concept originates from coding theory, specifically from the realm of maximum distance separable (MDS) codes. A $[n, k]$ code with minimum distance $d$ is considered MDS if it meets the Singleton bound, $d = n - k + 1$. If the code and its dual satisfy $d = n - k$, we refer to the code as a Near-MDS (NMDS) code.

\begin{theorem}~\cite[page 321]{FJ77}
    An $[n, k]$ code $C$ with generator matrix $G = [ I_k| M ]$, where $M$ is a $k \times ( n - k )$ matrix, is MDS if and only if every square submatrix (formed from any $i$ rows and any $i$ columns, for any $i = 1, 2 , \ldots, min \{k, n - k \}$) of $M$ is nonsingular.
\end{theorem}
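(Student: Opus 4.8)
The plan is to route through the standard bridge between minimum distance and linear independence of the columns of a generator matrix, and then to exploit the systematic form $G = [I_k \mid M]$ to convert a statement about the $k\times k$ minors of $G$ into one about arbitrary square minors of $M$. \textbf{Step 1 (from MDS to the generator matrix).} First I would record that $C$ is MDS if and only if every $k$ columns of $G$ are linearly independent, i.e. every $k\times k$ submatrix of $G$ is nonsingular. To see this, write a codeword as $\mathbf{x}G$ with $\mathbf{x}\in\mathbb{F}_{2^r}^k$ and note that its weight is $n$ minus the number of its zero coordinates. A nonzero codeword vanishes on a set $J$ of $k$ coordinates exactly when $\mathbf{x}^{\top}G_J = 0$ for some $\mathbf{x}\neq 0$, where $G_J$ is the $k\times k$ submatrix on the columns indexed by $J$; such an $\mathbf{x}$ exists precisely when $G_J$ is singular. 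Hence some $G_J$ is singular if and only if there is a nonzero codeword with at least $k$ zeros, i.e. of weight at most $n-k$. Comparing with the Singleton bound $d \le n-k+1$, the code attains $d = n-k+1$ if and only if no such codeword exists, which is exactly nonsingularity of every $G_J$.

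\textbf{Step 2 (reducing minors of $G$ to minors of $M$).} Next I would fix a choice of $k$ columns of $G = [I_k \mid M]$: say $a$ columns from the identity block, indexed by $S \subseteq \{1,\dots,k\}$ with $|S| = a$, and $k-a$ columns from $M$, indexed by $T \subseteq \{1,\dots,n-k\}$ with $|T| = k-a$. Expanding the determinant of this $k\times k$ matrix along the $a$ unit-vector columns coming from $S$ shows that it equals, up to sign, the determinant of the submatrix of $M$ on rows $\{1,\dots,k\}\setminus S$ and columns $T$. Since $|\{1,\dots,k\}\setminus S| = k-a = |T|$, this is a square minor of $M$ of size $i := k-a$. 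Thus each such $k\times k$ submatrix of $G$ is nonsingular if and only if the associated $i\times i$ minor of $M$ is nonzero.

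\textbf{Step 3 (matching the index ranges).} Finally I would let $S$ and $T$ range over all admissible choices. As $a$ varies, $i = k-a$ runs over $0,1,\dots,\min\{k,n-k\}$, the upper bound holding because at most $n-k$ columns are available in $M$. The case $i = 0$ corresponds to taking the full identity block, whose minor is $\pm 1$ and hence never singular, so it imposes no condition; the cases $i = 1,\dots,\min\{k,n-k\}$ reproduce precisely the requirement that every $i\times i$ submatrix of $M$ be nonsingular. Combining Steps 1--3 gives the claimed equivalence.

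The main obstacle I anticipate is the bookkeeping in Step 2: one must justify the cofactor expansion cleanly, correctly identify the surviving row-complement $\{1,\dots,k\}\setminus S$ after eliminating the unit columns, and confirm that as the selections vary these surviving minors of $M$ sweep out \emph{all} square submatrices in the range $1 \le i \le \min\{k,n-k\}$ with no omission and no spurious extra conditions. The sign produced by the expansion is immaterial, since only the vanishing or nonvanishing of the determinant matters, which lightens that part of the argument.
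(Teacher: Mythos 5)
The paper states this as a cited classical result (MacWilliams--Sloane, page 321) and supplies no proof of its own, so there is nothing internal to compare against; your argument is the standard one from that reference and it is correct. Step 1 correctly translates $d=n-k+1$ into nonsingularity of every $k\times k$ column-submatrix of $G$ (a nonzero codeword vanishing on a $k$-set $J$ exists iff $G_J$ is singular, and the Singleton bound closes the equivalence), and Steps 2--3 correctly reduce those minors to the square minors of $M$ by cofactor expansion along the chosen unit columns, with the range $i=0,1,\ldots,\min\{k,n-k\}$ and the vacuous $i=0$ case handled properly.
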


\begin{definition}
	A matrix $M$ of order $n$ is said to be an MDS matrix if $[I_n|M]$ is a generator matrix of a $[2n,n]$ MDS code.
\end{definition}

Another way to define an MDS matrix is as follows.

\begin{fact}
    A square matrix $M$ is an MDS matrix if and only if every square submatrices of $M$ is nonsingular. 
\end{fact}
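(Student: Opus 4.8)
The plan is to derive this Fact directly from the preceding Theorem of~\cite{FJ77} together with the Definition of an MDS matrix, rather than reproving anything from scratch. Let $M$ be a square matrix of order $n$. By the Definition, $M$ is MDS precisely when $[I_n \mid M]$ is the generator matrix of a $[2n, n]$ MDS code. I would instantiate the Theorem with code parameters (length, dimension) equal to $(2n, n)$: in the Theorem's notation this is a $[2n, n, d]$ code whose generator matrix $[I_k \mid M]$ has $k = n$, so the associated matrix $M$ has size $k \times (2n - k) = n \times n$, matching our square $M$.

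The crux is the computation of the index range appearing in the Theorem. Its submatrix condition runs over all $i = 1, 2, \ldots, \min\{k,\, n-k\}$, which in this instantiation collapses to $i = 1, 2, \ldots, \min\{n, n\} = n$. Thus the Theorem demands that every matrix obtained by selecting $i$ rows and $i$ columns of $M$ be nonsingular, for each $i$ from $1$ up to $n$. I would then observe that, as $i$ ranges over $1, \ldots, n$ and the row/column choices range over all $i$-subsets, this family is exactly the collection of all square submatrices of the $n \times n$ matrix $M$; the extreme case $i = n$ forces $M$ itself to be nonsingular, and conversely every square submatrix of $M$ arises in this way. Hence the Theorem's condition is logically identical to the assertion that every square submatrix of $M$ is nonsingular, which is precisely the claim of the Fact.

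The only point requiring genuine care is the notation clash: the symbol $n$ denotes the code length in the Theorem but the matrix order in the Definition and in the Fact. Accordingly, I would rename the code parameters to $(2n, n)$ explicitly before invoking the Theorem, so that the substitution $\min\{k,\, n-k\} = n$ is transparent and no off-by-one in the index range can slip in. Beyond this bookkeeping there is no real obstacle, since the Fact is an immediate specialization of the Theorem to the square case $k = n$; the argument is essentially a one-line corollary once the parameters are correctly matched.
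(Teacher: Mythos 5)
Your derivation is correct and is exactly the route the paper intends: the Fact is stated without proof precisely because it is the immediate specialization of the quoted theorem from~\cite{FJ77} to a $[2n,n]$ code with $k=n$, so that $\min\{k,\,n-k\}=n$ and the submatrix condition ranges over all square submatrices of $M$ including $M$ itself. Your care with the notation clash between the code length and the matrix order is the only subtlety, and you handle it correctly.
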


In lightweight cryptography, Near-MDS (NMDS) matrices offer a better balance between security and efficiency as a diffusion layer, compared to MDS matrices.

\begin{definition}
	A matrix $M$ of order $n$ is said to be a Near-MDS (NMDS) matrix if $[I_n|M]$ is a generator matrix of a $[2n,n]$ Near-MDS (NMDS) code.
\end{definition}

Similar to MDS matrices, we have the following characterization of an NMDS matrix.

\begin{lemma}\cite{Li_Wang_2017,Viswanath2006_nMMDS}\label{Lemma_nMDS_charac}
    Let $M$ be a non-MDS matrix of order $n$, where $n$ is a positive integer with $n \geq 2$. Then $M$ is NMDS if and only if for any $1 \leq g \leq n-1$ each $g \times (g + 1)$ and $(g + 1) \times g$ submatrix of $M$ has at least one $g \times g$ nonsingular submatrix.
\end{lemma}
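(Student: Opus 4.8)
The plan is to work entirely on the coding-theoretic side, treating the statement as a dictionary between the weight structure of the $[2n,n]$ code $C$ generated by $[I_n \mid M]$ and the ranks of submatrices of $M$. For a $[2n,n]$ code the Singleton bound gives $d(C) \le n+1$, with equality exactly when $C$ (equivalently $C^\perp$) is MDS, and $C$ is NMDS precisely when $d(C) = n$ and $d(C^\perp) = n$ (Singleton defect one for both code and dual). Since $M$ is assumed non-MDS we already have $d(C) \le n$ and $d(C^\perp) \le n$, using that a code is MDS if and only if its dual is. Thus it suffices to characterize the two inequalities $d(C) \ge n$ and $d(C^\perp) \ge n$ in terms of the submatrix condition, as these combined with the non-MDS hypothesis force the equalities $d(C) = d(C^\perp) = n$.

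The central step is the equivalence: $d(C) \ge n$ holds if and only if for every $1 \le g \le n-1$ each $g \times (g+1)$ submatrix of $M$ has a nonsingular $g \times g$ submatrix, which I would prove by a support-counting argument. Writing a nonzero codeword as $(x, xM)$, its weight is $\mathrm{wt}(x) + \mathrm{wt}(xM)$; if $\mathrm{wt}(x) = g$ with support $S$, then a codeword of weight $\le n-1$ forces $xM$ to vanish on a set $T$ of at least $g+1$ columns, so that the restriction $x|_S$ lies in the left kernel of the $g \times (g+1)$ submatrix $M_{S,T}$, whence that submatrix has rank $< g$. Conversely, any $g \times (g+1)$ submatrix of rank $< g$ yields a nonzero left-kernel vector extending to a codeword of weight $\le n-1$. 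Here I would invoke the standard fact that a $g \times (g+1)$ matrix has full row rank $g$ iff it contains a nonsingular $g \times g$ submatrix, converting the rank statement into the submatrix statement of the lemma.

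For the dual I would use that, in characteristic two, $C^\perp$ is generated by $[M^\top \mid I_n]$ and hence, after swapping the two coordinate blocks, is equivalent to the code of $[I_n \mid M^\top]$; since minimum distance is permutation-invariant, applying the central step to $M^\top$ shows $d(C^\perp) \ge n$ iff every $g \times (g+1)$ submatrix of $M^\top$ has a nonsingular $g \times g$ submatrix. A $g \times (g+1)$ submatrix of $M^\top$ is the transpose of a $(g+1) \times g$ submatrix of $M$, and full rank is preserved under transposition, so this is exactly the condition that every $(g+1) \times g$ submatrix of $M$ has a nonsingular $g \times g$ submatrix. Assembling the two halves and invoking the non-MDS hypothesis then completes the proof.

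I expect the main obstacle to be the careful bookkeeping in the support-counting step: verifying that the constructed codeword genuinely has weight at most $n-1$ even when the left-kernel vector has support strictly smaller than $S$ (here only an upper bound on the weight is needed, so it goes through), and tracking the exact range $1 \le g \le n-1$ so that the thresholds separating $d(C)=n$ from the MDS case $d(C)=n+1$ are handled cleanly. By comparison, the passage to the dual is routine once the characteristic-two identification $C^\perp = \langle [M^\top \mid I_n]\rangle$ and the transpose-invariance of rank are in place.
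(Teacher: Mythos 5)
The paper does not prove this lemma at all: it is imported verbatim from the cited references as a known characterization of NMDS matrices, so there is no in-paper argument to compare against. Your proof is correct and is essentially the standard derivation one finds in those sources. The reduction to the two inequalities $d(C)\ge n$ and $d(C^\perp)\ge n$ via the Singleton bound and the non-MDS hypothesis is sound; the support-counting equivalence between codewords $(x,xM)$ of weight at most $n-1$ and rank-deficient $g\times(g+1)$ submatrices is carried out with the right bookkeeping (a codeword of weight $\le n-1$ automatically has $1\le \mathrm{wt}(x)\le n-1$, and in the converse direction only the upper bound $\mathrm{wt}(x)\le g$ is needed, as you note); and the passage to the dual via $C^\perp=\langle[M^\top\mid I_n]\rangle$ in characteristic two, block swapping, and transpose-invariance of rank correctly converts the dual condition into the $(g+1)\times g$ half of the statement. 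The only cosmetic remark is that the identity $C^\perp=\langle[-M^\top\mid I_n]\rangle$ holds over any field, so the restriction to characteristic two is a convenience rather than a necessity; the lemma and your argument are field-independent.
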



\noindent One of the elementary row operations on matrices is multiplying a row of a matrix by a nonzero scalar. MDS (or NMDS) property remains invariant under such operations. Thus, we have the following result regarding MDS (or NMDS) matrices.

\begin{lemma}\label{Lemma_DMD_MDS}
	Let $M$ be an MDS (or NMDS) matrix, then for any nonsingular diagonal matrices $D_1$ and $D_2$, $D_1MD_2$ will also be an MDS (or NMDS) matrix.
\end{lemma}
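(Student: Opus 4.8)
The plan is to use the submatrix characterizations already established—the Fact for the MDS case and Lemma~\ref{Lemma_nMDS_charac} for the NMDS case—together with the elementary observation that multiplying by nonsingular diagonal matrices merely rescales the entries of every submatrix by nonzero factors, and hence preserves the singularity or nonsingularity of each submatrix.

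First I would fix notation: write $D_1 = \diag(x_1, \ldots, x_n)$ and $D_2 = \diag(y_1, \ldots, y_n)$ with all $x_i, y_j \in \FF_{2^r}^\star$, so that $(D_1 M D_2)_{i,j} = x_i\, (M)_{i,j}\, y_j$. Then for any index sets $I = \{i_1, \ldots, i_g\}$ and $J = \{j_1, \ldots, j_g\}$ of equal size, the corresponding square submatrix of $D_1 M D_2$ factors as $\diag(x_{i_1}, \ldots, x_{i_g}) \cdot M[I,J] \cdot \diag(y_{j_1}, \ldots, y_{j_g})$, where $M[I,J]$ denotes the submatrix of $M$ on rows $I$ and columns $J$. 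Taking determinants gives
\[
\det\big((D_1 M D_2)[I,J]\big) = \Big(\prod_{t=1}^{g} x_{i_t} y_{j_t}\Big)\,\det\big(M[I,J]\big).
\]
Since every $x_{i_t}$ and $y_{j_t}$ is nonzero, the scalar prefactor is nonzero, so $(D_1 M D_2)[I,J]$ is nonsingular if and only if $M[I,J]$ is nonsingular. This yields a one-to-one correspondence between the nonsingular square submatrices of $M$ and those of $D_1 M D_2$.

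For the MDS claim I would invoke the Fact directly: since every square submatrix of $M$ is nonsingular, the displayed identity forces every square submatrix of $D_1 M D_2$ to be nonsingular as well, so $D_1 M D_2$ is MDS. For the NMDS claim I would first note that the transformation is invertible, $M = D_1^{-1}(D_1 M D_2) D_2^{-1}$, so by the MDS argument applied in both directions, $D_1 M D_2$ is non-MDS precisely when $M$ is. It then remains to verify the rectangular condition of Lemma~\ref{Lemma_nMDS_charac}: a $g \times (g+1)$ (or $(g+1)\times g$) submatrix of $D_1 M D_2$ again factors as a nonsingular diagonal matrix times the corresponding submatrix of $M$ times a nonsingular diagonal matrix, and each of its $g \times g$ subblocks is a nonzero rescaling of the corresponding $g \times g$ subblock of $M$. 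Hence such a submatrix contains a nonsingular $g \times g$ block if and only if the corresponding submatrix of $M$ does, so the NMDS characterization is preserved verbatim.

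Because the entire argument rests on the nonvanishing of the diagonal entries, there is no genuine obstacle here; the only point requiring care is the NMDS case, where one must separately confirm both that the non-MDS hypothesis is inherited (via invertibility of the transformation) and that the rectangular-submatrix condition carries over, rather than relying on the MDS-style statement alone.
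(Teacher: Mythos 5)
Your proof is correct and follows the same underlying idea as the paper, which states this lemma without a formal proof, justifying it only by the remark that multiplying rows (and columns) by nonzero scalars preserves the MDS/NMDS property. Your write-up simply makes that one-line observation rigorous via the determinant identity for scaled submatrices, including the careful (and necessary) handling of the non-MDS hypothesis and the rectangular-submatrix condition in the NMDS case.
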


\noindent Using involutory diffusion matrices is more beneficial for implementation since it allows the same module to be utilized in both encryption and decryption phases.

\begin{definition}
	An involutory matrix is defined as a matrix $M$ of order $n$ that satisfies the condition $M^2 = I_n$ or, equivalently, $M = M^{-1}$.
\end{definition}

Therefore, based on Lemma~\ref{Lemma_DMD_MDS}, we can deduce the following result.

\begin{corollary}\label{Corollary_DMD_MDS}
	For any nonsingular diagonal matrix $D$, the matrix $DMD^{-1}$ is an involutory MDS (or NMDS) matrix if and only if $M$ is also an involutory MDS (or NMDS) matrix.
\end{corollary}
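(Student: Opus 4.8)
The plan is to prove the biconditional by separating it into two independent claims: that conjugation by $D$ preserves the MDS (or NMDS) property in both directions, and that it preserves the involutory property in both directions. Since $D$ is a nonsingular diagonal matrix, so is $D^{-1}$, and both facts will follow from results already available.

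First I would handle the MDS (or NMDS) part using Lemma~\ref{Lemma_DMD_MDS}. Taking $D_1 = D$ and $D_2 = D^{-1}$ in that lemma, if $M$ is MDS (or NMDS) then $DMD^{-1}$ is MDS (or NMDS). For the converse, I would apply the same lemma to the matrix $DMD^{-1}$ with the roles reversed, i.e. with $D_1 = D^{-1}$ and $D_2 = D$; this recovers $D^{-1}(DMD^{-1})D = M$, so if $DMD^{-1}$ is MDS (or NMDS) then so is $M$. Thus the MDS (or NMDS) property transfers freely between $M$ and $DMD^{-1}$.

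Next I would treat the involutory property by a direct computation. Squaring the conjugate and using $D^{-1}D = I_n$ gives
\[
(DMD^{-1})^2 = DMD^{-1}DMD^{-1} = DM^2D^{-1}.
\]
Hence $(DMD^{-1})^2 = I_n$ if and only if $DM^2D^{-1} = I_n$, which (conjugating back by $D^{-1}$ on the left and $D$ on the right) is equivalent to $M^2 = I_n$. Therefore $DMD^{-1}$ is involutory exactly when $M$ is. Combining this equivalence with the MDS (or NMDS) equivalence from the previous step yields the claimed biconditional.

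I do not expect a genuine obstacle here, since the argument is essentially a combination of Lemma~\ref{Lemma_DMD_MDS} with a one-line cancellation. The only point requiring a little care is to observe that conjugation $DMD^{-1}$ is a special case of the two-sided multiplication $D_1MD_2$ appearing in the lemma, so that the lemma applies verbatim with the diagonal matrices $D$ and $D^{-1}$; and to note that the clean telescoping $DMD^{-1}DMD^{-1} = DM^2D^{-1}$ is precisely what makes the involutory property invariant under conjugation.
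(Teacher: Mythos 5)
Your proposal is correct and follows the same route as the paper, which simply deduces the corollary from Lemma~\ref{Lemma_DMD_MDS}; you merely spell out the two directions and the (routine) conjugation-invariance of the involutory property, which the paper leaves implicit.
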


Permuting the rows or columns of a matrix does not change the MDS (or NMDS) property. Thus, we have the following result regarding MDS (or NMDS) matrices.

\begin{lemma}\label{Lemma_PMQ_MDS}
	Let $M$ be an MDS (or NMDS) matrix, then for any two permutation matrices $P$ and $Q$, $PMQ$ will also be an MDS (or NMDS) matrix.
\end{lemma}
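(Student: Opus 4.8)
The plan is to reduce both claims to the submatrix characterizations already available in the excerpt: for MDS matrices, the Fact that $M$ is MDS if and only if every square submatrix of $M$ is nonsingular, and for NMDS matrices, Lemma~\ref{Lemma_nMDS_charac}. The crucial observation is that left- and right-multiplication by permutation matrices induces, for each size $i$, a bijection between the $i \times i$ submatrices of $M$ and those of $PMQ$ that preserves nonsingularity. Writing $\sigma$ and $\tau$ for the permutations of $\{1, \ldots, n\}$ associated with $P$ and $Q$ so that $(PMQ)_{i,j} = M_{\sigma(i), \tau(j)}$, the submatrix of $PMQ$ formed from a row set $I$ and a column set $J$ consists exactly of the entries $M_{\sigma(i), \tau(j)}$ for $i \in I$, $j \in J$, which is the submatrix of $M$ on rows $\sigma(I)$ and columns $\tau(J)$, read in a possibly permuted order. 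Reordering the rows or columns of a square matrix changes its determinant only by a sign and hence never affects whether it vanishes, so this submatrix of $PMQ$ is nonsingular if and only if the associated submatrix of $M$ is.

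First I would dispatch the MDS case. By the Fact above, $M$ is MDS precisely when all of its square submatrices are nonsingular. The correspondence shows that every square submatrix of $PMQ$ equals, up to a reordering of rows and columns, a square submatrix of $M$, and that every square submatrix of $M$ arises in this way, with nonsingularity matched on both sides. Hence $PMQ$ is MDS if and only if $M$ is, which settles the MDS half.

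For the NMDS case I would invoke Lemma~\ref{Lemma_nMDS_charac}, which applies to non-MDS matrices. The same correspondence gives that $PMQ$ is non-MDS exactly when $M$ is non-MDS, so the hypothesis of the lemma transfers. Moreover, since $\sigma$ acts only on rows and $\tau$ only on columns, the correspondence carries each $g \times (g+1)$ submatrix of $PMQ$ to a $g \times (g+1)$ submatrix of $M$ of the same shape (and likewise for $(g+1) \times g$ submatrices), matching their $g \times g$ subblocks nonsingular for nonsingular. Therefore the condition of Lemma~\ref{Lemma_nMDS_charac} holds for $PMQ$ exactly when it holds for $M$, finishing the NMDS half.

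The one point requiring care — and the step I would write out in full — is the submatrix correspondence itself: one must verify that after selecting rows and columns of $PMQ$ and sorting them back into increasing order the result is a bona fide submatrix of $M$, and that rows and columns stay separate so that a $g \times (g+1)$ block cannot be sent to a $(g+1) \times g$ block. Once this bookkeeping is settled, the invariance of nonsingularity under row and column permutations — immediate because permutation matrices are nonsingular, or because they scale the determinant by $\pm 1$, which equals $1$ over $\mathbb{F}_{2^r}$ — supplies the rest, and both halves follow at once.
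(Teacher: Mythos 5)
Your proof is correct and follows exactly the route the paper intends: the paper states this lemma without proof, justifying it only by the remark that permuting rows or columns does not change the MDS (or NMDS) property, and your submatrix-bijection argument via the Fact and Lemma~\ref{Lemma_nMDS_charac} is precisely the standard elaboration of that remark. The one detail worth keeping is your observation that the non-MDS hypothesis of Lemma~\ref{Lemma_nMDS_charac} also transfers between $M$ and $PMQ$, which makes the NMDS half airtight.
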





The search method for finding MDS (or NMDS) matrices is typically based on specific types of matrices, such as Hadamard matrices, circulant matrices, and their variants. Circulant and Hadamard matrices of order $n$ can have at most $n$ distinct elements; thus, these matrices are utilized in search method to reduce the search space.

\begin{definition}\label{CIRCULANT}
    An $n\times n$ matrix $M$ is said to be a circulant matrix if its elements are determined by the elements of its first row $a_1,a_2,\ldots,a_n$ as
    $$M
    =\begin{bmatrix}
    a_1 & a_2 & \ldots & a_{n}\\
    a_{n} & a_1 & \ldots & a_{n-1}\\
    \vdots & \vdots &\vdots &\vdots \\
    a_2 & a_3 & \ldots & a_1\\
    \end{bmatrix}.$$
\end{definition}

For simplicity, we will use $Circ(a_1, a_2, \ldots, a_n)$ to denote the circulant matrix with the first row given by $a_1, a_2, \ldots, a_n$.

\begin{definition}
	A matrix $M$ of size $2^n \times 2^n$ in the field $\mathbb{F}_{2^r}$ is called a Finite Field Hadamard matrix, or simply a Hadamard matrix, if it can be represented in the following form:
	\begin{center}
		M=
		$\begin{bmatrix}
		U & V \\
		V & U
		\end{bmatrix}$
	\end{center}
	where the submatrices $U$ and $V$ are also Hadamard matrices.
\end{definition}

For example, a $2^2\times 2^2$ Hadamard matrix is:
\begin{align*}
    M=\begin{bmatrix}
        a_1 & a_2 & a_3 & a_4\\
        a_2 & a_1 & a_4 & a_3\\
        a_3 & a_4 & a_1 & a_2 \\
        a_4 & a_3 & a_2 & a_1
    \end{bmatrix}.
\end{align*}

Note that Hadamard matrices are symmetric and can be represented by their first row. For simplicity, we will denote a Hadamard matrix with the first row as $a_1, a_2, \ldots, a_n$ as $\hada(a_1, a_2, \ldots, a_n)$. 

Also, it is worth noting that $H^2=c^2I_n$, where $c=a_1 + a_2 + \cdots + a_n$~\cite{MDS_Survey}. Thus, when $c=1$, the Hadamard matrix will be involutory.

\section{Enumeration of $4\times 4$ Hadamard MDS matrices}\label{Sec:4Hadamard_MDS}
In this section, we enumerate $4\times 4$ Hadamard MDS matrices, including involutory Hadamard MDS matrices, over the finite field $\mathbb{F}_{2^r}$. First, we present the conditions that the Hadamard matrix $\hada(a,b,c,d)$ must satisfy to be considered an MDS matrix.

\begin{lemma}\label{Lemma_Hadamard_MDS}
	The Hadamard matrix $\hada(a,b,c,d)$ over $\mathbb{F}_{2^r}$ is MDS if and only if the tuple $(a,b,c,d) \in \mathbb{F}_{2^r}^4$ satisfies the  conditions: (i)~$a,b,c,d\in \mathbb{F}_{2^r}^\star$, (ii)~$\set{a,b,c,d}$  being a set of four distinct elements, (iii)~$d\neq a^{-1}bc$, (iv)~$d\neq ab^{-1}c$, (v)~$d\neq abc^{-1}$, and (vi)~$d\neq a+b+c$.
\end{lemma}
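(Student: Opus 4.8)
The plan is to exploit the MDS characterization from the earlier Fact: a matrix is MDS iff every square submatrix is nonsingular. For a $4\times 4$ matrix this means checking the nonsingularity of every $1\times 1$, $2\times 2$, $3\times 3$, and $4\times 4$ submatrix. Because $\hada(a,b,c,d)$ is completely determined by its first row and has the rigid block structure shown in the excerpt, the number of genuinely distinct submatrix determinants is small, and the six listed conditions should emerge as exactly the distinct nonvanishing requirements. First I would handle the $1\times 1$ minors: these are precisely the entries $a,b,c,d$, so their nonsingularity is equivalent to condition (i), namely $a,b,c,d\in\mathbb{F}_{2^r}^\star$.

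Next I would work through the $2\times 2$ minors. By the Hadamard structure every $2\times 2$ submatrix has the form $\begin{bmatrix} x & y \\ z & w \end{bmatrix}$ where the pairs of entries come from the four symbols in a constrained way; over characteristic $2$ its determinant is $xw+yz$. I expect that after listing representatives up to the symmetry of the construction, the condition that all $2\times 2$ minors are nonzero reduces to the four symbols being pairwise distinct, i.e.\ condition (ii). The key simplification here is that many of the $\binom{4}{2}^2=36$ submatrices coincide or differ only by row/column permutation, so by Lemma~\ref{Lemma_PMQ_MDS} it suffices to examine one representative from each orbit; the distinctness of $\{a,b,c,d\}$ is what rules out a repeated pair producing a singular $2\times 2$ block.

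The $3\times 3$ minors are where the real content lies, and I expect this to be the main obstacle. There are $\binom{4}{3}^2=16$ such minors, but again the Hadamard symmetry collapses them into a few determinant expressions. Computing one such $3\times 3$ determinant over $\mathbb{F}_{2^r}$ and factoring it should produce an expression whose vanishing is governed by relations among $a,b,c,d$; I anticipate that requiring all $3\times 3$ minors to be nonsingular yields precisely conditions (iii), (iv), and (v), that is $d\neq a^{-1}bc$, $d\neq ab^{-1}c$, and $d\neq abc^{-1}$. The challenge is organizing the $3\times 3$ determinant computations so that the three symmetric-looking forbidden products fall out cleanly, and verifying that no additional independent constraint arises from minors I have not explicitly reduced; here it helps that conditions (i) and (ii) may already be assumed when analyzing the higher minors, allowing factors to be cancelled.

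Finally, the single $4\times 4$ minor is $\det(\hada(a,b,c,d))$ itself. Using the known identity $H^2=c'^2 I_4$ with $c'=a+b+c+d$, or by direct block-determinant computation, the determinant should factor so that its nonvanishing is equivalent to $a+b+c+d\neq 0$, which over characteristic $2$ is condition (vi), namely $d\neq a+b+c$. I would then assemble the argument: assuming all six conditions, each class of minors is nonsingular and hence the matrix is MDS; conversely, if the matrix is MDS then each minor is nonsingular, forcing all six conditions in turn. Throughout I would rely on Lemma~\ref{Lemma_PMQ_MDS} to reduce the submatrix bookkeeping and on the elementary structure of determinants in characteristic $2$ to keep the algebra manageable.
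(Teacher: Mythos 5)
Your overall strategy --- enumerate every square minor of $\hada(a,b,c,d)$, use the symmetry to collapse them to a short list, factor, and read off the six conditions --- is exactly what the paper does. However, your predicted correspondence between minor sizes and conditions is wrong in two places, and the part you single out as ``where the real content lies'' is precisely the part that contains no content. First, the $2\times 2$ minors do \emph{not} reduce to the distinctness condition (ii) alone: besides determinants of the form $x^2+y^2=(x+y)^2$ (which do give distinctness), the Hadamard structure produces $2\times 2$ submatrices such as the one on rows $\{1,2\}$ and columns $\{1,3\}$, namely $\begin{bmatrix} a & c \\ b & d\end{bmatrix}$, with determinant $ad+bc$. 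The full list of $2\times 2$ minors contains $ad+bc$, $ac+bd$, and $ab+cd$, and it is their nonvanishing that yields conditions (iii), (iv), and (v). If you only checked enough $2\times 2$ representatives to confirm pairwise distinctness, you would miss these three conditions entirely.

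Second, the $3\times 3$ minors yield nothing new: each one factors as $(\text{entry})\cdot(a+b+c+d)^2$. For instance, the leading principal $3\times 3$ minor is $a^3+ab^2+ac^2+ad^2=a(a+b+c+d)^2$, and the other fifteen are $x(a+b+c+d)^2$ for $x\in\{a,b,c,d\}$. So the $3\times 3$ level only reproduces condition (i) together with condition (vi), and your plan to extract $d\neq a^{-1}bc$, $d\neq ab^{-1}c$, $d\neq abc^{-1}$ from the $3\times 3$ determinants would fail --- those factors simply do not appear there. Your treatment of the $1\times 1$ minors and of the full determinant ($a^4+b^4+c^4+d^4=(a+b+c+d)^4$, giving condition (vi)) is correct. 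The fix is straightforward: carry out the $2\times 2$ computation completely rather than assuming it reduces to distinctness, and note that the $3\times 3$ and $4\times 4$ minors are then automatically handled by conditions (i) and (vi).
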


\begin{proof}
	The set of minors of $M=\hada(a,b,c,d)$ is given by: 
	\begin{center}
		\begin{small}
			$\{\splitatcommas{a, b, c, d, a^2 + b^2, bc + ad, ac + bd, c^2 + d^2, a^2 + c^2, ab + cd, b^2 + d^2, a^2 + d^2, b^2 + c^2, a^3 + ab^2 + ac^2 + ad^2, a^2b + b^3 + bc^2 + bd^2, a^2c + b^2c + c^3 + cd^2, a^2d + b^2d + c^2d + d^3, a^4 + b^4 + c^4 + d^4}\}$.
		\end{small}
	\end{center}
	
	These minors have factors given by:
	\begin{small}
		\begin{equation}\label{Eqn_Factor_minors}
			\{ \splitatcommas{a, b, c, d, a + b, bc + ad, ac + bd, c+d, a+c, ab+cd, b+d, a+d, b+c, a+b+c+d} \}.
		\end{equation}
	\end{small}

	Therefore, $M$ is an MDS matrix if and only if each element in the set given in~\ref{Eqn_Factor_minors} is nonzero. This condition is satisfied if and only if $(a,b,c,d) \in \mathbb{F}_{2^r}^4$ satisfies the conditions: (i)~$a,b,c,d\in \mathbb{F}_{2^r}^\star$, (ii)~$\set{a,b,c,d}$  being a set of four distinct elements, (iii)~$d\neq a^{-1}bc$, (iv)~$d\neq ab^{-1}c$, (v)~$d\neq abc^{-1}$, and (vi)~$d\neq a+b+c$. This completes the proof. \qed
\end{proof}

\begin{theorem}\label{Th_Hadamard_MDS_Counting}
	For $r\geq 3$, the count of $4\times 4$ Hadamard MDS matrices over the finite field $\mathbb{F}_{2^r}$ is given by 
	\[(2^r-1)(2^r-2)(2^r-4)(2^r-7).\]
\end{theorem}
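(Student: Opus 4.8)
The plan is to count the tuples $(a,b,c,d) \in \mathbb{F}_{2^r}^4$ satisfying the six conditions of Lemma~\ref{Lemma_Hadamard_MDS}, since each such tuple corresponds to exactly one Hadamard MDS matrix $\hada(a,b,c,d)$. I would approach this by a sequential counting argument, choosing the entries one at a time and tracking how many choices remain admissible at each stage, taking care that the count is symmetric in $a,b,c,d$ so that the order of selection is immaterial to the final answer. First I would fix $a \in \mathbb{F}_{2^r}^\star$, giving $2^r - 1$ choices. Then $b$ must be nonzero and distinct from $a$, giving $2^r - 2$ choices. Next $c$ must be nonzero and distinct from both $a$ and $b$; since $a \neq b$ these are three distinct forbidden values (including $0$), giving $2^r - 4$ choices. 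Hence the first three factors $(2^r-1)(2^r-2)(2^r-4)$ arise naturally from conditions (i) and (ii) alone.

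The crux is then to show that, having fixed admissible $a,b,c$, the number of valid choices for $d$ is exactly $2^r - 7$. The value $d$ must avoid the following forbidden values: $0$ (from (i)); $a$, $b$, $c$ (from (ii)); $a^{-1}bc$, $ab^{-1}c$, $abc^{-1}$ (from (iii)--(v)); and $a+b+c$ (from (vi)). That is eight linear/rational expressions in total, so a priori $d$ has at least $2^r - 8$ choices. To obtain $2^r - 7$ I must prove that exactly one coincidence among these eight forbidden values is forced, i.e.\ that among the eight expressions there is always precisely one equality, so that they collapse to seven distinct forbidden elements. The main obstacle is establishing this coincidence structure rigorously: I would need to show, for every admissible triple $(a,b,c)$, that exactly one pair among the eight expressions is equal and no other pair is.

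The key observation I would exploit is that the three ``product-type'' values $a^{-1}bc$, $ab^{-1}c$, $abc^{-1}$ and the four ``simple'' values $0,a,b,c$ interact in a controlled way. For instance, one checks when a product value collides with a simple value: $a^{-1}bc = a$ iff $bc = a^2$, $a^{-1}bc = b$ iff $c = a$ (excluded), $a^{-1}bc = c$ iff $b = a$ (excluded), and $a^{-1}bc = 0$ is impossible since $a,b,c \in \mathbb{F}_{2^r}^\star$; symmetric statements hold for the other two product values. Likewise I would analyze collisions among the three product values themselves (e.g.\ $a^{-1}bc = ab^{-1}c$ iff $b^2 = a^2$ iff $a = b$, excluded) and collisions involving the sum $a+b+c$. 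Carrying out this case analysis, I expect to find that the forbidden set $\{0,a,b,c,a^{-1}bc,ab^{-1}c,abc^{-1},a+b+c\}$ always has exactly seven distinct elements for admissible $(a,b,c)$, yielding $2^r - 7$ choices for $d$.

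Finally, multiplying the factors gives $(2^r-1)(2^r-2)(2^r-4)(2^r-7)$, and I would confirm that the resulting count does not depend on the order in which $a,b,c,d$ were selected, so no overcounting or undercounting occurs. The delicate part, as noted, is the exact-coincidence lemma for the eight forbidden values; an alternative route would be inclusion--exclusion over the eight equality events, but the direct argument showing a unique forced coincidence is cleaner and is what I would pursue.
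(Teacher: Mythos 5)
There is a genuine gap in your argument, and it sits exactly at the step you yourself flag as the crux. Your plan is to show that for \emph{every} admissible triple $(a,b,c)$ the eight forbidden values $\{0,a,b,c,a^{-1}bc,ab^{-1}c,abc^{-1},a+b+c\}$ collapse to exactly seven distinct elements, so that $d$ always has $2^r-7$ choices. This ``exact-coincidence lemma'' is false. Your own collision analysis shows why: $a^{-1}bc=a$ holds iff $c=a^2b^{-1}$, $ab^{-1}c=b$ iff $c=b^2a^{-1}$, $abc^{-1}=c$ iff $c=\alpha$ with $\alpha^2=ab$, and $a+b+c=0$ iff $c=a+b$; all other pairs among the eight expressions are provably never equal for admissible $(a,b,c)$. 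So a coincidence occurs \emph{only} when $c$ lies in the set $\{a^2b^{-1},\,b^2a^{-1},\,a+b,\,\alpha\}$, which generically has four elements; for the remaining $c$ the forbidden set has the full eight elements and $d$ has only $2^r-8$ choices. There is also an internal inconsistency in your count for $c$: you correctly list three forbidden values ($0$, $a$, $b$) but then write $2^r-4$ choices; the correct count is $2^r-3$.

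The correct argument therefore requires a case split on $c$ rather than a uniform count. With $a,b$ fixed (and $b/a$ not a primitive cube root of unity), the four special values of $c$ are distinct and distinct from $0,a,b$, so the $(c,d)$ pairs number $4(2^r-7)+(2^r-7)(2^r-8)=(2^r-4)(2^r-7)$ --- this identity, not an independent choice of $c$ and $d$, is where the factors $(2^r-4)$ and $(2^r-7)$ come from. A further subcase is needed when $b=a\cdot x$ with $x^3=1$, $x\neq 1$: then the four special values of $c$ all coincide, that single $c$ forces $|T|=4$, and the contribution $(2^r-4)(2^r-8)+(2^r-4)=(2^r-4)(2^r-7)$ happens to agree, which is why the final product is clean. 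Your inclusion--exclusion fallback could in principle be made to work, but it would have to account for these conditional and mutually interacting coincidences; as written, the uniform $2^r-7$ claim (and the compensating miscount of $c$) means the proof does not go through even though the final formula is correct.
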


\begin{proof}
	According to Lemma~\ref{Lemma_Hadamard_MDS}, the number of $4\times 4$ Hadamard MDS matrices over $\mathbb{F}_{2^r}$ is equal to the cardinality of the set $S$, defined as:

	\begin{equation*}
		\begin{aligned}
			S &= &&\{(a,b,c,d)\in (\mathbb{F}_{2^r}^\star)^4:~\set{a,b,c,d}~ \text{being a set of four distinct elements}\\
			  & &&~~\text{and} ~\splitatcommas{d\neq a^{-1}bc,~d\neq ab^{-1}c,~d\neq abc^{-1},~d\neq a+b+c}\}.
		\end{aligned}
	\end{equation*}


	Since $a \in \mathbb{F}_{2^r}^\star$, there are $2^r-1$ possible choices for $a$. Furthermore, with $b \in \mathbb{F}_{2^r}^\star$ and $a \neq b$, there are $2^r-2$ possible choices for $b$. Similarly, for $c$, there are $2^r-3$ possible choices.

	From the above conditions on $d$, we can say that

	\begin{equation*}\label{Eqn_set_D}
		\begin{aligned}
			d & \not \in T=\set{0,a,b,~c,a^{-1}bc, ab^{-1}c,~ abc^{-1}, a+b+c}.
		\end{aligned}
	\end{equation*}

	However, the cardinality of the set $T$, $|T|$, is not equal to 8 for all $a, b, c \in \mathbb{F}_{2^r}^\star$. For example, when $c = a^2b^{-1}$ (which is different from both $a$ and $b$), we have $a^{-1}bc = a$. Thus, when $c = a^2b^{-1}$, we have $|T| \leq 7$.

	\vspace*{1em}

	Now, we will demonstrate that $a^{-1}bc \not \in \set{b,c,ab^{-1}c,abc^{-1},a+b+c}$ for any choice of $a,b$ and $c$.

	\vspace{4pt}
	\noindent 
	\textbf{Case 1:} $a^{-1}bc=b$. \newline
	In this case, $a^{-1}bc=b$, which implies $a=c$. However, this contradicts our assumptions.

	\vspace{4pt}
	\noindent 
	\textbf{Case 2:} $a^{-1}bc=c$. \newline
	In this case, $a^{-1}bc=c$, which implies $a=b$, and this is a contradiction.

	\vspace{4pt}
	\noindent 
	\textbf{Case 3:} $a^{-1}bc=ab^{-1}c$. \newline
	Now, 
	\begin{equation*}
		\begin{aligned}
			& a^{-1}bc =ab^{-1}c \\
			& \implies a =b ~[\text{Since characteristic of}~ \mathbb{F}_{2^r}~\text{is}~2],
		\end{aligned}
	\end{equation*}
	which is a contradiction. Similarly, for the following cases, we will have a contradiction.

	\vspace{4pt}
	\noindent 
	\textbf{Case 4:} $a^{-1}bc=abc^{-1}$. This implies that $a=c$.

	\vspace{4pt}
	\noindent 
	\textbf{Case 5:} $a^{-1}bc=a+b+c$. This implies that $a=c$ or $a=b$.

	\vspace{4pt}
	\noindent Therefore, we have $a^{-1}bc \not \in \set{b,c,ab^{-1}c,abc^{-1},a+b+c}$.


	\vspace*{1em}

	Similarly, we can show the following:
	\begin{enumerate}[(i)]
		\setlength\itemsep{1em}
		\item $ab^{-1}c \not \in \set{a,c,a^{-1}bc,abc^{-1},a+b+c}$ and when $c=b^2a^{-1}$ (which is not equal to $a$ and not equal to $b$), we have $ab^{-1}c=b$.
		
		\item $abc^{-1} \not \in \set{a,b,a^{-1}bc,ab^{-1}c,a+b+c}$. Since, the characteristic of $\mathbb{F}_{2^r}$ is 2, $x \mapsto x^2$ is an isomorphism over $\mathbb{F}_{2^r}$. Hence, there exist a unique element $\alpha \in \mathbb{F}_{2^r}$ such that $\alpha^2=ab$. Now for $c=\alpha$ (which is not equal to $a$ and not equal to $b$), we have $abc^{-1}=c$.
		
		\item $a+b+c \notin \set{a,b,c,a^{-1}bc,ab^{-1}c,abc^{-1}}$. However, $a+b+c$ may equal zero, and when $c=a+b$, we have $a+b+c=0$.
	\end{enumerate}

	\begin{figure}[h!]
        \includegraphics[width=\linewidth]{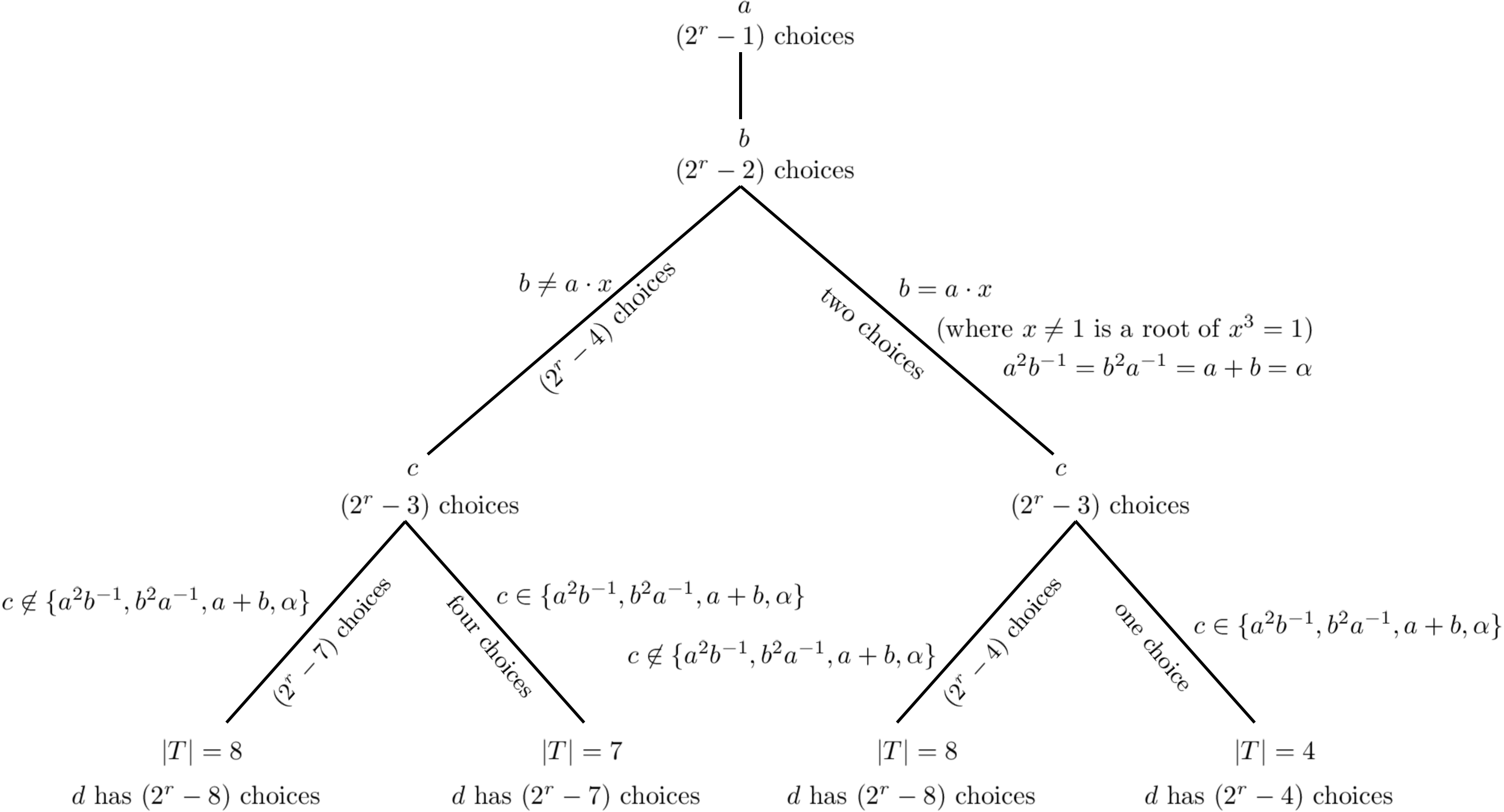}
        \caption{A figure illustrating the cases for determining the number of choices for d.}
		\label{Figure_choices_d}
    \end{figure}

	Therefore, when $c \in \{a^2b^{-1}, b^2a^{-1}, a+b, \alpha\}$, we have $|T| = 7$, so there are $2^r - 7$ choices for $d$. When $c \notin \{a^2b^{-1}, b^2a^{-1}, a+b, \alpha\}$, we have $|T| = 8$, so there are $2^r - 8$ choices for $d$.

	However, the elements in the set $\{a^2b^{-1}, b^2a^{-1}, a+b, \alpha\}$ may not always be distinct. For example, if we take $b = a \cdot x$, where $x \neq 1$ is a solution of the equation $x^3 = 1$, we have:

	\begin{center}
		$a^2b^{-1} = b^2a^{-1} = a + b = \alpha$.
	\end{center}

	Also, since $b\neq a$, we have only two such cases where $b=a\cdot x$ holds. Moreover, note that all the elements in the set $\{a^2b^{-1}, b^2a^{-1}, a+b, \alpha\}$ are distinct for any other choices of $a$ and $b$. Thus, we can divide the choices for $b$ into two cases:
	\begin{enumerate}
		\item When $b \neq a \cdot x$.
		\item When $b = a \cdot x$.
	\end{enumerate}

	Each case can be further divided into two subcases:
	\begin{enumerate}[(i)]
		\item $c \not \in \{a^2b^{-1}, b^2a^{-1}, a+b, \alpha\}$.
		\item $c \in \{a^2b^{-1}, b^2a^{-1}, a+b, \alpha\}$.
	\end{enumerate}

	Note that when $b = a \cdot x$ and $c \in \{a^2b^{-1}, b^2a^{-1}, a+b, \alpha\}$, we have $|T| = 4$. Hence, in this case, $d$ has $2^r - 4$ choices. For a detailed overview of all the cases, refer to Figure~\ref{Figure_choices_d}.


	Therefore, we have
	\begin{equation*}
		\begin{aligned}
			|S|& &&= (2^r-1)~[ (2^r-4)\cdot \set{(2^r-7)(2^r-8)+4\cdot (2^r-7)}\\
			&	&&\hspace{5em} + 2\cdot \set{(2^r-4)(2^r-8)+ (2^r-4)} ]\\
			&	&&= (2^r-1)(2^r-2)(2^r-4)(2^r-7).
		\end{aligned}
	\end{equation*}

	Now, we will consider the scenario where the equation $x^3 = 1$ has no solutions~\footnote{For example, over the finite field $\mathbb{F}_{2^5}$ other than $x = 1$, the equation $x^3 = 1$ has no solutions other than $x = 1$.} in $\mathbb{F}_{2^r}$ other than the $x=1$. In this scenario, we only consider the case $b \neq a\cdot x$. Thus, we have

	\begin{equation*}
		\begin{aligned}
			|S|&= (2^r-1)(2^r-2)[4\cdot (2^r-7)+(2^r-7)(2^r-8)]\\
			&= (2^r-1)(2^r-2)(2^r-4)(2^r-7).
		\end{aligned}
	\end{equation*}

	Therefore, the number of $4\times 4$ Hadamard MDS matrices over the finite field $\mathbb{F}_{2^r}$ is equal to $(2^r-1)(2^r-2)(2^r-4)(2^r-7)$. \qed
\end{proof}


\noindent It is worth noting that the Hadamard matrix $\hada(a,b,c,d)$ over $\mathbb{F}_{2^r}$ achieves involutory property if and only if the condition $a+b+c+d=1$ is satisfied. In the following theorem, we provide the exact count of the $4\times 4$ involutory Hadamard MDS matrices over $\mathbb{F}_{2^r}$.





\begin{theorem}\label{Thm_Inv_Hadamard_MDS_Counting}
    For $r\geq 3$, the count of $4\times 4$ involutory Hadamard MDS matrices over the finite field $\mathbb{F}_{2^r}$ is given by \[(2^r-2)(2^r-4)(2^r-7).\]
\end{theorem}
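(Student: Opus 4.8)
The plan is to build directly on the enumeration already established in Theorem~\ref{Th_Hadamard_MDS_Counting}, restricting the counting set $S$ by the single additional linear constraint that makes the Hadamard matrix involutory. As noted immediately before the theorem statement, $\hada(a,b,c,d)$ is involutory precisely when $a+b+c+d=1$. So the quantity to compute is the cardinality of
\[
S' = \{(a,b,c,d) \in S : a+b+c+d = 1\},
\]
where $S$ is the MDS-defining set from the previous proof. The expected final answer $(2^r-2)(2^r-4)(2^r-7)$ is exactly $|S|/(2^r-1)$, which strongly suggests that the cleanest route is to show that the involutory condition picks out a $1/(2^r-1)$ fraction of $S$ via a uniform scaling argument rather than re-running the case analysis from scratch.

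The key idea I would pursue is a scaling bijection. By Lemma~\ref{Lemma_DMD_MDS} (or more specifically by the fact that the MDS conditions (i)--(vi) in Lemma~\ref{Lemma_Hadamard_MDS} are all homogeneous/scale-invariant), if $(a,b,c,d) \in S$ then $(\lambda a, \lambda b, \lambda c, \lambda d) \in S$ for every $\lambda \in \mathbb{F}_{2^r}^\star$. Indeed each condition is invariant under simultaneous scaling: distinctness and nonzeroness are obvious, and $d \neq a^{-1}bc$, $d \neq ab^{-1}c$, $d \neq abc^{-1}$, $d \neq a+b+c$ are each preserved since both sides scale by $\lambda$ (the products $a^{-1}bc$ etc.\ each have net degree $1$). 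Thus $\mathbb{F}_{2^r}^\star$ acts on $S$ by scaling, and this action is free because $\lambda(a,b,c,d) = (a,b,c,d)$ with $a \neq 0$ forces $\lambda = 1$. Each orbit therefore has size exactly $2^r-1$. Now for any tuple $(a,b,c,d) \in S$, its sum $s = a+b+c+d$ is nonzero: this is precisely condition guaranteed in the MDS analysis (the factor $a+b+c+d$ appears in~\eqref{Eqn_Factor_minors} and must be nonzero for the matrix to be MDS). Hence on each orbit there is a \emph{unique} scalar $\lambda = s^{-1}$ for which the scaled tuple has sum equal to $1$, namely the representative $(a/s, b/s, c/s, d/s)$. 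This gives a bijection between orbits of the scaling action and elements of $S'$.

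Therefore $|S'|$ equals the number of orbits, which is $|S|/(2^r-1) = (2^r-1)(2^r-2)(2^r-4)(2^r-7)/(2^r-1) = (2^r-2)(2^r-4)(2^r-7)$, completing the count. I would carry out the steps in this order: first state the involutory characterization $a+b+c+d=1$ and define $S'$; second verify scale-invariance of each of the six MDS conditions so that $S$ is closed under the $\mathbb{F}_{2^r}^\star$-action; third observe that the sum $s=a+b+c+d$ is nonzero on all of $S$ (as a consequence of MDS-ness, since $a+b+c+d$ is one of the factors in~\eqref{Eqn_Factor_minors}); fourth establish that each orbit meets $S'$ in exactly one point via the normalizing scalar $s^{-1}$; and finally divide.

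The main obstacle, and the step warranting the most care, is verifying cleanly that the sum $a+b+c+d$ is never zero on $S$ and that the four conditions on $d$ are genuinely homogeneous of the same degree so that the scaling action is well-defined on $S$ rather than merely on the ambient space. The nonvanishing of the sum is not one of the explicitly listed conditions (i)--(vi) but is implicit in the factor list~\eqref{Eqn_Factor_minors}; I would either invoke that the matrix being MDS forces the minor factor $a+b+c+d \neq 0$, or note that $d = a+b+c$ is excluded by condition (vi) together with the characteristic-$2$ identity $a+b+c+d = 0 \iff d = a+b+c$, which directly ties the nonvanishing of the sum to an already-established condition. Once that identification is made, the remainder is a short orbit-counting argument with no further case analysis required.
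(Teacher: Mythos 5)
Your proposal is correct and takes essentially the same approach as the paper's own proof, which likewise argues that scaling by a nonzero field element preserves the Hadamard MDS property, that the row sum $a+b+c+d$ is nonzero on every MDS tuple, and that scaling therefore gives a bijection onto the matrices with row sum $1$, yielding the count as the total from Theorem~\ref{Th_Hadamard_MDS_Counting} divided by $2^r-1$. Your version merely spells out the homogeneity of the conditions and the freeness of the $\mathbb{F}_{2^r}^\star$-action in more detail than the paper does.
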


\begin{proof}
    If $M = \hada(a, b, c, d)$ is a Hadamard matrix, then any nonzero scalar multiple of $M$ is also a Hadamard matrix. Now since $M$ is MDS, we must have $a + b + c + d \neq 0$. Thus, for every $t \neq 0$, there exist an equal number of Hadamard matrices such that $a + b + c + d = t$, as one can freely scale them, resulting in a bijection. In particular, the number of Hadamard matrices with a row sum of 1 is exactly the total number of Hadamard matrices divided by $(2^r - 1)$. Therefore, according to Theorem~\ref{Th_Hadamard_MDS_Counting}, the count of $4 \times 4$ involutory Hadamard MDS matrices is given by $(2^r - 2)(2^r - 4)(2^r - 7)$. \qed
\end{proof}

From Theorems~\ref{Th_Hadamard_MDS_Counting} and \ref{Thm_Inv_Hadamard_MDS_Counting}, one can easily find the number of non-involutory Hadamard MDS matrices over the finite field $\mathbb{F}_{2^r}$. We present the count in the following corollary.

\begin{corollary}
	For $r\geq 3$, the count of $4\times 4$ non-involutory Hadamard MDS matrices over the finite field $\mathbb{F}_{2^r}$ is given by $(2^r-2)^2(2^r-4)(2^r-7)$.
\end{corollary}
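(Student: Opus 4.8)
The plan is to obtain the count by a straightforward partition-and-subtraction argument, since the corollary follows immediately from the two preceding theorems. First I would observe that, as noted just before the corollary, a Hadamard MDS matrix $\hada(a,b,c,d)$ is involutory precisely when $a+b+c+d = 1$, and non-involutory precisely when $a+b+c+d \neq 1$. These two conditions are mutually exclusive and jointly exhaustive, so the set of all $4 \times 4$ Hadamard MDS matrices over $\mathbb{F}_{2^r}$ splits cleanly into the involutory ones and the non-involutory ones, with no overlap. Consequently the count of non-involutory Hadamard MDS matrices equals the total count minus the involutory count.

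Next I would simply invoke the two explicit formulas already established: by Theorem~\ref{Th_Hadamard_MDS_Counting} the total number of $4\times 4$ Hadamard MDS matrices is $(2^r-1)(2^r-2)(2^r-4)(2^r-7)$, and by Theorem~\ref{Thm_Inv_Hadamard_MDS_Counting} the number of involutory ones is $(2^r-2)(2^r-4)(2^r-7)$. Subtracting the latter from the former and factoring out the common term $(2^r-2)(2^r-4)(2^r-7)$ leaves the factor $(2^r-1)-1 = 2^r-2$, which combines with the existing $(2^r-2)$ to give the stated expression $(2^r-2)^2(2^r-4)(2^r-7)$.

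There is essentially no substantive obstacle here: the only point that warrants a moment's care is confirming that every Hadamard MDS matrix is counted in exactly one of the two classes, which is guaranteed by the sharp dichotomy $a+b+c+d=1$ versus $a+b+c+d\neq 1$ together with the fact (used in the proof of Theorem~\ref{Thm_Inv_Hadamard_MDS_Counting}) that an MDS Hadamard matrix necessarily has nonzero row sum. Given that, the remainder is routine algebraic simplification, and I would present the computation in a single short display to finish the proof.
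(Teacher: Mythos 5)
Your proposal is correct and matches the paper's (implicit) argument exactly: the corollary is stated as an immediate consequence of Theorems~\ref{Th_Hadamard_MDS_Counting} and~\ref{Thm_Inv_Hadamard_MDS_Counting}, obtained by subtracting the involutory count from the total and factoring. Your added remark that the dichotomy is exhaustive because an MDS Hadamard matrix has nonzero row sum is a sound (and slightly more careful) justification of the same computation.
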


In Table~\ref{Table_Hadamard_count}, we provide the exact counts for $4\times 4$ Hadamard MDS matrices, involutory Hadamard MDS matrices, and non-involutory Hadamard MDS matrices over $\mathbb{F}_{2^r}$ for $r=3, 4, \ldots, 8$. Appendix~\ref{Sec:Appendix} provides the full list of $4\times 4$ Hadamard MDS and involutory Hadamard MDS matrices over $\mathbb{F}_{2^3}$.

\begin{table}[h!]
	\centering
	\begin{tabular}{|c|c|c|c|}
	\hline
	$\mathbb{F}_{2^r}$ & Hadamard MDS & involutory Hadamard MDS & non-involutory Hadamard MDS \\
	\hline
	$\mathbb{F}_{2^3}$ & 168 & 24 & 144 \\
	$\mathbb{F}_{2^4}$ & 22680 & 1512 & 21168 \\
	$\mathbb{F}_{2^5}$ & 651000 & 21000 & 630000 \\
	$\mathbb{F}_{2^6}$ & 13358520 & 212040 & 13146480 \\
	$\mathbb{F}_{2^7}$ & 240094008 & 1890504 & 238203504 \\
	$\mathbb{F}_{2^8}$ & 4064187960 & 15937992 & 4048249968 \\
	\hline
	\end{tabular}
	\vspace{1em}
	\caption{The count of $4\times 4$ Hadamard MDS, involutory Hadamard MDS and non-involutory Hadamard MDS matrices.\label{Table_Hadamard_count}}
\end{table}

\noindent It can be observed that if any two elements in the first row of a Hadamard matrix $\hada(a_1, \ldots, a_n) $ are equal, then the matrix contains a singular $2 \times 2$ submatrix. Consequently, it cannot be an MDS matrix. We mention this result as follows.

\begin{proposition}
	In a Hadamard MDS matrix $\hada(a_1, \ldots, a_n) $, it holds that $a_i \neq a_j $ for $i, j = 1, \ldots, n $.
\end{proposition}


Thus, in a Hadamard MDS matrix of order $n$, the number of distinct elements is exactly equal to $n$. In contrast, a circulant MDS matrix can have repeated entries in its first row. For instance, in the AES~\cite{AES} diffusion matrix, which is a circulant MDS matrix, the first row contains two entries of $1$. This repetition makes such matrices particularly valuable in lightweight cryptography.

Next, similar to $4 \times 4$ Hadamard MDS matrices, we aim to derive an explicit formula for the count of $4 \times 4$ circulant MDS matrices over the finite field $\mathbb{F}_{2^r}$. However, we are currently unable to formulate this due to constraints on the variables. Specifically, the $4 \times 4$ circulant matrix $\text{Circ}(a,b,c,d)$ will be an MDS matrix if and only if each element of the set
\begin{center}
	\begin{small}
		$\{ \splitatcommas{a, b, c, d, a^2b + bc^2 + b^2d + d^3, ab + cd, a + c, a^3 + b^2c + ac^2 + cd^2, ab^2 + a^2c + c^3 + ad^2, b^2 + ac, b + d, b^3 + a^2d + c^2d + bd^2, c^2 + bd, bc + ad, a^2 + bd, ac + d^2, a+b+c+d} \}$
	\end{small}
\end{center}
is nonzero. Consequently, we have provided experimental results comparing $4 \times 4$ Hadamard and circulant MDS matrices over $\mathbb{F}_{2^r}$. Table~\ref{Table_comparison_count} presents a comparison of their counts.


\begin{table}[h!]
	\centering
	\begin{tabular}{|c|c|c|c|c|}
	\hline
	$\mathbb{F}_{2^r}$ & Hadamard MDS & involutory Hadamard MDS & circulant MDS\\
	\hline
	$\mathbb{F}_{2^3}$ & 168 & 24 & 0 \\
	$\mathbb{F}_{2^4}$ & 22680 & 1512 & 16560 \\
	$\mathbb{F}_{2^5}$ & 651000 & 21000 & 580320 \\
	$\mathbb{F}_{2^6}$ & 13358520 & 212040 & 12685680 \\
	$\mathbb{F}_{2^7}$ & 240094008 & 1890504 & 234269280 \\
	$\mathbb{F}_{2^8}$ & 4064187960 & 15937992 & 4015735920 \\
	\hline
	\end{tabular}
	\vspace{1em}
	\caption{Comparison of the count of $4\times 4$ Hadamard MDS and circulant MDS matrices\label{Table_comparison_count}}
\end{table}

\section{$2n\times 2n$ involutory MDS matrices}\label{Sec:general_2n_MDS}
In this section, we first present a general form for all $2n\times 2n$ involutory MDS matrices over $\mathbb{F}_{2^r}$. Then, we determine the exact count of $2\times 2$ MDS and involutory MDS matrices over $\mathbb{F}_{2^r}$. Using these counts, we establish an upper bound for the enumeration of all $4\times 4$ involutory MDS matrices over $\mathbb{F}_{2^r}$.


\begin{lemma}\label{Lemma_structure_Inv_MDS}
	Let $M=
	\begin{bmatrix}
		A_1 & A_2\\
		A_3 & A_4
	\end{bmatrix}$
	be a $2n\times 2n$ involutory MDS matrix over $\mathbb{F}_{2^r}$, where $A_i$ are $n\times n$ matrices. Then $M$ can be expressed in the form $M=DHD^{-1}$, where $H=
	\begin{bmatrix}
		A_1 & I_n+A_1\\
		I_n+A_1 & A_1
	\end{bmatrix}$
	and $D$ is the block diagonal matrix $\diag(I_n,A_3(I_n+A_1)^{-1})$.
\end{lemma}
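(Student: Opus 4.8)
The plan is to exploit the two hypotheses separately: the involutory condition $M^2 = I_{2n}$ supplies a system of block equations relating $A_1, A_2, A_3, A_4$, while the MDS property guarantees the nonsingularity of the blocks needed to build $D$. Once these are in hand, the identity $M = DHD^{-1}$ reduces to a direct block computation.

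First I would expand $M^2 = I_{2n}$ into $n \times n$ blocks to obtain the four relations $A_1^2 + A_2A_3 = I_n$, $A_1 A_2 = A_2 A_4$, $A_3 A_1 = A_4 A_3$, and $A_3 A_2 + A_4^2 = I_n$. The first relation is the crucial one: since $\mathbb{F}_{2^r}$ has characteristic $2$, we have $I_n + A_1^2 = (I_n + A_1)^2$, so $A_2 A_3 = (I_n + A_1)^2$. Because $M$ is MDS, each of its $n \times n$ corner blocks is nonsingular; in particular $A_2$ and $A_3$ are nonsingular, so $A_2 A_3$ is nonsingular, forcing $(I_n + A_1)^2$—and therefore $I_n + A_1$—to be nonsingular. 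This is precisely what makes $B := A_3(I_n + A_1)^{-1}$ well-defined and invertible, so that $D = \diag(I_n, B)$ is a genuine nonsingular block-diagonal matrix.

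With $D^{-1} = \diag(I_n, B^{-1})$ and $B^{-1} = (I_n + A_1)A_3^{-1}$, I would then compute the four blocks of $DHD^{-1}$ and match them against $A_1, A_2, A_3, A_4$. The $(1,1)$ block is immediately $A_1$, and the $(2,1)$ block is $B(I_n + A_1) = A_3$ by the definition of $B$. The $(1,2)$ block is $(I_n + A_1)B^{-1} = (I_n + A_1)^2 A_3^{-1}$, which equals $A_2$ upon substituting $A_2 A_3 = (I_n + A_1)^2$.

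The step I expect to be the main obstacle—more bookkeeping than genuine difficulty—is the $(2,2)$ block $B A_1 B^{-1} = A_3(I_n + A_1)^{-1} A_1 (I_n + A_1)A_3^{-1}$. Here I would use that $A_1$ commutes with $I_n + A_1$ (and hence with its inverse) to collapse the middle factor to $A_1$, giving $A_3 A_1 A_3^{-1}$, and then invoke the third block relation $A_3 A_1 = A_4 A_3$ to conclude this equals $A_4$. The only points requiring care throughout are the noncommutativity of the blocks—so every product must be kept in the correct order—and the repeated, essential use of characteristic $2$ to rewrite $I_n + A_1^2$ as $(I_n + A_1)^2$.
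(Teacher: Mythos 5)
Your proposal is correct and follows essentially the same route as the paper: expand $M^2=I_{2n}$ into block equations, use the MDS property to get nonsingularity of $A_2$ and $A_3$ (hence of $(I_n+A_1)^2=I_n+A_1^2$ and of $I_n+A_1$), and verify the conjugation blockwise using $A_2=(I_n+A_1^2)A_3^{-1}$ and $A_4=A_3A_1A_3^{-1}$. The only cosmetic difference is that you check the four blocks of $DHD^{-1}$ one at a time (including the explicit commutation argument for the $(2,2)$ block), whereas the paper substitutes the expressions for $A_2$ and $A_4$ into $M$ and factors the resulting matrix directly.
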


\begin{proof}
	Since $M$ is involutory we have 
	\begin{align}
		& A_1^2+A_2A_3=I_n,  && A_1A_2+A_2A_4= \mathbf{0}, \notag \\
		& A_3A_1+A_4A_3=\mathbf{0}, && A_3A_2+A_4^2=I_n. \notag \\
		\implies & A_2=(I_n+A_1^2)A_3^{-1} && A_4= A_3A_1A_3^{-1}. \notag
	\end{align}
	
    \noindent Now, as $M$ is an MDS matrix, $A_2$ is nonsingular. Thus, we must have $I_n+A_1^2$ nonsingular. Further, since $I_n+A_1^2=(I_n+A_1)^2$, it follows that $I_n+A_1$ is nonsingular as well. Also, $A_1 \neq I_n$. Therefore, $M$ can be expressed as:
	\begin{equation*}
		\begin{aligned}
			M & =
			\begin{bmatrix}
				A_1 & (I_n+A_1^2)A_3^{-1}\\
				A_3 & A_3A_1A_3^{-1}
			\end{bmatrix}\\
			& =
			\begin{bmatrix}
				I_n & \mathbf{0}\\
				\mathbf{0} & A_3(I_n+A_1)^{-1}
			\end{bmatrix}
			\begin{bmatrix}
				A_1 & I_n+A_1\\
				I_n+A_1 & A_1
			\end{bmatrix}
			\begin{bmatrix}
				I_n & \mathbf{0}\\
				\mathbf{0} & (I_n+A_1)A_3^{-1}
			\end{bmatrix}\\
			& = DHD^{-1},
		\end{aligned}
	\end{equation*}
	where $H$ is the block matrix 
	$\begin{bmatrix}
		A_1 & I_n+A_1\\
		I_n+A_1 & A_1
	\end{bmatrix}$
	and $D$ is the block diagonal matrix $\diag(I_n,A_3(I_n+A_1)^{-1})$. \qed
\end{proof}

\begin{lemma}\label{Lemma_2_MDS_Counting}
	The count of $2\times 2$ MDS matrices over the finite field $\mathbb{F}_{2^r}$ is given by $(2^r-1)^3(2^r-2)$.
\end{lemma}

\begin{proof}
	Let 
	$M= \begin{bmatrix}
		a & b \\
		c & d 
	\end{bmatrix}$ 
	be an MDS matrix over $\mathbb{F}_{2^r}$.
	Note that $a$, $b$, $c$, and $d$ all belong to the nonzero elements of the finite field $\mathbb{F}_{2^r}$. Also, $\det (M)=ad+bc$ must be nonzero. This implies that $d\neq a^{-1}bc$. Consequently, each of the values $a$, $b$, and $c$ can be chosen from $2^r - 1$ possibilities, as they are drawn from the nonzero elements of $\mathbb{F}_{2^r}$. On the other hand, $d$ can be selected from $2^r - 2$ possibilities because it cannot be equal to $a^{-1}bc$.
	Hence, the total number of $2 \times 2$ MDS matrices can be calculated as $(2^r - 1)^3(2^r - 2)$. \qed
\end{proof}

\begin{lemma}\label{Lemma_Inv_2_MDS_Counting}
	The count of $2\times 2$ involutory MDS matrices over the finite field $\mathbb{F}_{2^r}$ is given by $(2^r-1)(2^r-2)$.
\end{lemma}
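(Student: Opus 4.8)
The plan is to impose the involutory condition $M^2 = I_2$ directly on a general $2\times 2$ matrix and combine it with the MDS requirement that every entry be nonzero (for a $2\times 2$ matrix the only submatrices to examine are the four entries and the full determinant). Writing $M = \begin{bmatrix} a & b \\ c & d \end{bmatrix}$, I would expand $M^2 = I_2$ into the four scalar equations $a^2 + bc = 1$, $b(a+d) = 0$, $c(a+d) = 0$, and $bc + d^2 = 1$. The MDS condition forces $b \neq 0$ and $c \neq 0$, so the two off-diagonal equations immediately give $a + d = 0$; since $\mathbb{F}_{2^r}$ has characteristic $2$, this means $a = d$. Substituting back, the two diagonal equations collapse to the single relation $bc = 1 + a^2 = (1+a)^2$.

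The remaining work is a counting argument. First I would observe that $a \neq 1$: if $a = 1$ then $bc = 0$, forcing $b = 0$ or $c = 0$ and contradicting the MDS requirement. Combined with $a \neq 0$, this leaves $2^r - 2$ admissible values for $a = d$. For each such $a$, the element $(1+a)^2$ is fixed and nonzero, so I may choose $b$ to be any of the $2^r - 1$ nonzero field elements, after which $c = b^{-1}(1+a)^2$ is uniquely determined and automatically nonzero. This yields exactly $(2^r-1)(2^r-2)$ matrices. One then checks that $\det(M) = a^2 + bc = 1 \neq 0$ holds automatically, so the full-matrix nonsingularity imposes no further constraint and there is no overcounting.

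I expect the computation to be routine; the only genuinely delicate points are the use of characteristic $2$ to pass from $a + d = 0$ to $a = d$, and the verification that the map $(a,b) \mapsto M$ is a bijection onto the set of involutory MDS matrices (so that counting $(a,b)$ pairs counts matrices exactly once). As an alternative route, one could instead invoke Lemma~\ref{Lemma_structure_Inv_MDS} with $n = 1$: every such $M$ then takes the form $M = DHD^{-1}$ with $H = \hada(a, 1+a)$ and $D = \diag(1, c(1+a)^{-1})$, and by Corollary~\ref{Corollary_DMD_MDS} the matrix $M$ is MDS precisely when $H$ is. Counting the $2^r - 2$ valid values of $a$ (those with $a \notin \{0,1\}$) together with the $2^r - 1$ valid values of the scaling parameter recovers the same total $(2^r-1)(2^r-2)$.
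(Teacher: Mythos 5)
Your proposal is correct. Your primary argument — expanding $M^2 = I_2$ entrywise, using $b,c\neq 0$ to force $a+d=0$ and hence $a=d$ in characteristic $2$, reducing the diagonal equations to $bc=(1+a)^2$, and counting the pairs $(a,b)$ with $a\notin\{0,1\}$ and $b\neq 0$ — is a complete, self-contained direct computation; the parametrization $(a,b)\mapsto M$ is visibly injective since $a$ and $b$ are entries of $M$, and your observation that $\det(M)=a^2+bc=1$ shows the full-matrix nonsingularity is automatic. The paper instead derives the lemma as the $n=1$ instance of its structural Lemma~\ref{Lemma_structure_Inv_MDS}: every $2\times 2$ involutory MDS matrix is $DHD^{-1}$ with $H=\hada(\alpha,1+\alpha)$ and $D=\diag(1,(1+\alpha)\beta^{-1})$, and Corollary~\ref{Corollary_DMD_MDS} transfers the MDS condition from $M$ to $H$; counting $\alpha\notin\{0,1\}$ and $\beta\in\mathbb{F}_{2^r}^{\star}$ gives the same total $(2^r-1)(2^r-2)$. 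The alternative route you sketch at the end is exactly this. Your direct route has the advantage of being independent of the structural lemma and making every constraint explicit; the paper's route has the advantage of fitting into the general $2n\times 2n$ framework that it subsequently reuses for the order-$4$ upper bound in Theorem~\ref{Th_4_MDS_Counting}.
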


\begin{proof}
	According to Lemma~\ref{Lemma_structure_Inv_MDS}, any $2\times 2$ involutory MDS matrix $M$ can be represented as $M=DHD^{-1}$, where $H$ is the $2\times 2$ Hadamard matrix $\hada(\alpha,1+\alpha)$, and $D$ is a nonsingular diagonal matrix $\diag(1,(1+\alpha)\beta^{-1})$, with $\alpha,\beta \in \mathbb{F}_{2^r}^\star$. Also, from Corollary~\ref{Corollary_DMD_MDS}, we can say that $M$ is an involutory MDS if and only if $H$ is an involutory MDS.

	Regarding $H$, there are $2^r-2$ possible choices since $\alpha \not \in \set{0,1}$. On the other hand, $D$ provides $2^r-1$ options. Therefore, the total number of $2\times 2$ involutory MDS matrices is $(2^r-1)(2^r-2)$. \qed
\end{proof}

The count of $2\times 2$ involutory MDS matrices over $\mathbb{F}_{2^r}$ is also provided in \cite{Sakalli2020}. However, it is noteworthy that the enumeration of $2\times 2$ involutory MDS matrices in this paper is based on the generic matrix form outlined in Lemma~\ref{Lemma_structure_Inv_MDS}, whereas the count in \cite{Sakalli2020} relies on the characterization of $2\times 2$ involutory MDS matrices.\\

\noindent In Theorem~\ref{Thm_Inv_Hadamard_MDS_Counting}, we provide the count of $4\times 4$ involutory Hadamard MDS matrices. Next, we establish an upper bound for the enumeration of all $4\times 4$ involutory MDS matrices.

\begin{proposition}\label{Prop_4_MDS_Counting}
	For $r\geq 3$, the number of $4\times 4$ involutory MDS matrices over the finite field $\mathbb{F}_{2^r}$ is upper bounded by $2^r(2^r-1)^3(2^r-2)^2(2^r-3)(2^r-4)$.
\end{proposition}

\begin{proof}
	From Lemma~\ref{Lemma_structure_Inv_MDS}, we can deduce that any $4\times 4$ involutory MDS matrix $M$ can be expressed as: 
	\begin{equation*}
		\begin{aligned}
			M=
			\begin{bmatrix}
				A_1 & (I_n+A_1^2)A_3^{-1}\\
				A_3 & A_3A_1A_3^{-1}
			\end{bmatrix},
		\end{aligned}
	\end{equation*}
	where both $A_1$ and $A_3$ are $2\times 2$ MDS matrices. Additionally, $A_1$ cannot be an involutory matrix. Therefore, by considering both Lemma~\ref{Lemma_2_MDS_Counting} and Lemma~\ref{Lemma_Inv_2_MDS_Counting}, we can conclude that $A_1$ has a total of $(2^r-1)^3(2^r-2)-(2^r-1)(2^r-2)=2^r(2^r-1)(2^r-2)^2$ possible choices.

	Each row of $A_3$ must be linearly independent with the rows of $A_1$. Since $A_3$ is a $2\times 2$ MDS matrix, there are $(2^r-1)(2^r-3)$ possible choices for the first row of $A_3$ and $(2^r-1)(2^r-4)$ possible choices for the second row. Thus, $A_3$ has a total of $(2^r-1)^2(2^r-3)(2^r-4)$ possible choices. Hence, the number of $4\times 4$ involutory MDS matrices over the finite field $\mathbb{F}_{2^r}$ is upper bounded by $2^r(2^r-1)^3(2^r-2)^2(2^r-3)(2^r-4)$. \qed
\end{proof}

It is noteworthy that there has been recent work on the enumeration of $4 \times 4$ involutory MDS matrices. In \cite{Kumar_4MDS} and \cite{Tuncay23}, the authors introduced hybrid methods for studying these matrices over $\mathbb{F}_{2^r}$. Importantly, in the hybrid approach, a representative involutory MDS matrix is typically derived through a search process, which subsequently serves as a basis for generating multiple involutory MDS matrices. Specifically, \cite{Tuncay23} significantly reduced the search space for finding all $4 \times 4$ involutory MDS matrices by identifying the representative involutory MDS matrices of order 4, thereby reducing the search space from $2^{16r}$ to $2^{8r}$. Following this, \cite{Kumar_4MDS} further refined the search space to $2^{5r}$. By conducting an exhaustive search within this optimized search space, the authors provided the counts of $4 \times 4$ involutory MDS matrices over $\mathbb{F}_{2^r}$ for $r = 3, 4, \ldots, 8$. Note that, as derived in \cite{Kumar_4MDS} and \cite{Tuncay23}, the count for $4\times 4$ involutory MDS matrices over $\mathbb{F}_{2^3}$ and $\mathbb{F}_{2^4}$ are 16,464 and 242,514,000, respectively.

\begin{remark}
    As discussed above, it is crucial to emphasize that the upper bound derived in Proposition~\ref{Prop_4_MDS_Counting} is very loose. For instance, while the count of all $4\times 4$ involutory MDS matrices over $\mathbb{F}_{2^3}$ is 16,464, our bound yields $19,75,680$. This is because our calculation only considers cases involving $2\times 2$ submatrices constructed from $A_1$ and $A_3$. To achieve a more precise bound, it is necessary to examine the other $2\times 2$ submatrices as well as all $3\times 3$ submatrices. However, currently, we are unable to exclude these potential cases from our analysis.
\end{remark}

\begin{remark}
    As stated in Lemma~\ref{Lemma_structure_Inv_MDS}, we can represent any $4\times 4$ involutory MDS matrix $M$ as $M=DHD^{-1}$, where $H$ is a $4\times 4$ involutory matrix. However, it is important to note that in this context, $D$ is not a standard diagonal matrix but rather a diagonal block matrix. Consequently, unlike Lemma~\ref{Lemma_Inv_2_MDS_Counting} for $2\times 2$ involutory MDS matrices, we cannot directly apply Corollary~\ref{Corollary_DMD_MDS} to assert that $M$ is MDS if and only if $H$ is MDS. Furthermore, considering the specific form of $H$, it cannot be considered an MDS matrix.
\end{remark}

\section{Some results on Near-MDS matrices}\label{Sec:Near-MDS}

In this section, we discuss some results related to Near-MDS (NMDS) matrices. In~\cite{Combo_NMDS_2023}, the authors investigated various matrix forms, including Hadamard, circulant, and left-circulant matrices, for the construction of NMDS matrices. However, there are some circulant-like matrices that remain to be studied in relation to NMDS matrices. Therefore, in this section, we study these matrices for the construction of NMDS matrices.

Furthermore, we know that an MDS matrix is always nonsingular, while the authors in~\cite{Combo_NMDS_2023} demonstrated that an NMDS matrix may be singular. Below, we will show that if a Hadamard matrix is singular, it cannot be an NMDS matrix.

\begin{theorem}\label{Th_Hadamard_singular}
	An NMDS Hadamard matrix over $\mathbb{F}_{2^r}$ is always nonsingular.
\end{theorem}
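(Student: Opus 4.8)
The plan is first to characterize singularity of a Hadamard matrix, and then to exploit the resulting degeneracy to produce a codeword that is too light for any near-MDS code. For the first part I would invoke the identity $H^2 = c^2 I_n$ with $c = a_1 + \cdots + a_n$ recorded in Section~\ref{Sec:Definition}. Taking determinants gives $(\det H)^2 = c^{2n}$, and since $\mathbb{F}_{2^r}$ has characteristic $2$ this forces $\det H = c^n$. Hence $H$ is singular exactly when $c = 0$, so it suffices to prove that a Hadamard matrix with vanishing row sum cannot be NMDS.

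Assume therefore $c = 0$. The same identity now reads $H^2 = \mathbf{0}$, so $H$ is nilpotent of index at most two; since then $\operatorname{Im}(H) \subseteq \ker(H)$, we get $\rank(H) \le n - \rank(H)$, i.e. $\rank(H) \le n/2$. Because a Hadamard matrix is symmetric, its left null space $W = \{ x \in \mathbb{F}_{2^r}^n : xH = 0 \}$ has dimension $n - \rank(H) \ge n/2$, which is at least $2$ for every Hadamard order $n = 2^m \ge 4$. Fixing a coordinate $i$, the subspace $\{ x \in W : x_i = 0 \}$ then still has positive dimension, so it contains a nonzero vector $x$; this $x$ has a zero entry and therefore Hamming weight at most $n-1$. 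The vector $(x, xH) = (x, \mathbf{0})$ is thus a nonzero codeword of $[I_n \mid H]$ of weight at most $n-1$. Because a $[2n, n]$ NMDS code has minimum distance exactly $n$, this codeword is forbidden, and the contradiction shows that a singular Hadamard matrix is never NMDS.

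The step I expect to be the crux is the passage from $H^2 = \mathbf 0$ to $\dim W \ge 2$: this is precisely where the order enters, and it is what excludes the genuinely NMDS singular matrices that occur at order $2$ (for instance $\hada(a,a)$ with $a \neq 0$), so the statement is really one about orders $n \ge 4$. If one prefers to argue purely through the combinatorial criterion of Lemma~\ref{Lemma_nMDS_charac} rather than quoting the minimum distance of an NMDS code, I would instead read the low-weight kernel vector $x$ as a linear dependency among the rows of $H$ indexed by $\operatorname{supp}(x)$ and convert it into a $(g+1)\times g$ submatrix, with $g = |\operatorname{supp}(x)| - 1$, every $g \times g$ minor of which is singular, directly violating the submatrix characterization. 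The delicate point in that variant is arranging the column selection so that all $g \times g$ minors vanish simultaneously, rather than merely the enlarged block being rank-deficient; the determinant-and-distance argument avoids this bookkeeping and is the version I would write up.
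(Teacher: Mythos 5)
Your proof is correct for Hadamard orders $n = 2^m \ge 4$, and it takes a genuinely different route from the paper's. The paper also reduces to the case $c = a_1 + \cdots + a_n = 0$, but then argues through the adjugate: from $M^2 = c^2 I_n$ it deduces that the characteristic polynomial is $(x+c)^n = x^n + c^n$, so by Cayley--Hamilton $\operatorname{adj}(M) = M^{n-1} = c^{n-2}M$; hence every $(n-1)\times(n-1)$ minor carries a factor $c^{n-2}$ and vanishes when $c=0$, which violates Lemma~\ref{Lemma_nMDS_charac} at $g = n-1$. You instead exploit nilpotency: $H^2 = \mathbf{0}$ gives $\operatorname{rank}(H) \le n/2$, hence a null space of dimension at least $2$, from which you extract a nonzero kernel vector with a prescribed zero coordinate and thus a codeword of $[I_n \mid H]$ of weight at most $n-1 < n$, contradicting the minimum distance $d = n$ of a $[2n,n]$ NMDS code. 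Your version is more elementary (no characteristic polynomial or adjugate identity) and yields the stronger quantitative fact $\operatorname{rank}(H) \le n/2$; the paper's version stays at the level of minors and plugs directly into the submatrix characterization it uses throughout. Both arguments genuinely require $n \ge 4$: the paper's because the factor $c^{n-2}$ degenerates to $c^0$ at $n=2$, yours because $\dim W \ge 2$ fails there --- and indeed $\hada(a,a)$ with $a \neq 0$ is a singular NMDS Hadamard matrix of order $2$, so your explicit caveat flags a restriction that the paper's statement and proof leave implicit.
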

\begin{proof}
	Let $M=\hada(a_1,a_2,\ldots,a_n)$ be a Hadamard matrix of order $n$ over $\mathbb{F}_{2^r}$. We know that $M^2 = c^2 I_n$, where $c = a_1 + a_2 + \cdots + a_n$, and $\det(M) = 0$ if and only if $c = 0$.
	
	Now, we know that, by Cayley-Hamilton theorem, the adj$(M)$ (the adjugate matrix of $M$) can be expressed in terms of powers of $M$ as
	\begin{equation*}
		\begin{aligned}
			\text{adj}(M)&= M^{n-1}+c_{n-1} M^{n-2} + \cdots + c_1 M + c_1I_n
		\end{aligned}
	\end{equation*}
	where $t^n + c_{n-1} t^{n-1} + \cdots + c_1 t + c_0$ is the characteristic polynomial of the matrix $M$.

	However, since $M^2=c^2I_n$, the eigenvalues of $M$ are $c$. Thus, the characteristic polynomial of the matrix $M$ is given by $(x+c)^n$. Since $M$ is a Hadamard matrix, $n$ is a power of $2$ which implies that the characteristic polynomial of $M$ is given by $x^n+c^n$.

	Therefore, we have
	\begin{equation*}
		\begin{aligned}
			\text{adj}(M)&= M^{n-1}= M^{n-2}\cdot M\\
			&= c^{n-2}I_n\cdot M \quad [\text{ since }M^2=c^2I_n]\\
			&= c^{n-2}M.
		\end{aligned}
	\end{equation*}

	Thus, each element in $\text{adj}(M)$ has a factor term of $c^{n-2}$. We also know that each element in $\text{adj}(M)$ corresponds to the determinant of the $(n-1) \times (n-1)$ submatrices of $M$. Therefore, we can conclude that the determinant of each $(n-1) \times (n-1)$ submatrix of $M$ is of the form $\alpha \cdot c^{n-2}$, for some $\alpha \in \mathbb{F}_{2^r}$. Consequently, if $M$ is singular, then each $(n-1) \times (n-1)$ submatrix of $M$ must also be singular. Thus, from Lemma~\ref{Lemma_nMDS_charac}, we conclude that $M$ can never be an NMDS matrix.
	\qed
\end{proof}





The absence of zero entries is a necessary condition for matrices such as Hadamard and circulant matrices to be MDS. Consequently, these matrices incur a high implementation cost. In contrast, having zero entries (with a maximum of one zero per row or column) does not affect the NMDS property of these matrices, resulting in a reduced implementation cost. More specifically, in Lemma~\ref{Lemma_nMDS_charac}, if we assume $g=1$, we can deduce that there is at most one zero in each row and each column of an NMDS matrix.


Leveraging this advantage, the authors in~\cite{Li_Wang_2017} introduced several generic lightweight involutory NMDS matrices of order $8$ derived from Hadamard matrices. Also, in~\cite{Combo_NMDS_2023}, the authors discussed the minimum implementation cost of Hadamard and circulant NMDS matrices. In the subsequent discussion, we explore the results related to $4 \times 4$ Hadamard and circulant NMDS matrices with one zero per row or column.

\begin{lemma}\label{Lemma_4Hadamard_NMDS}
	Any nonsingular $4 \times 4$ Hadamard matrix $\hada(0, a_1, a_2, a_3)$, where $a_i \in \mathbb{F}_{2^r}^*$ for $i = 1, 2, 3$, is an NMDS matrix over the field $\mathbb{F}_{2^r}$.
\end{lemma}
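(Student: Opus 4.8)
The plan is to verify the Near-MDS characterisation of Lemma~\ref{Lemma_nMDS_charac} directly. Writing the matrix out,
\[
M=\hada(0,a_1,a_2,a_3)=
\begin{bmatrix}
0 & a_1 & a_2 & a_3\\
a_1 & 0 & a_3 & a_2\\
a_2 & a_3 & 0 & a_1\\
a_3 & a_2 & a_1 & 0
\end{bmatrix},
\]
I first record two structural facts: $M$ has exactly one zero entry in each row and each column (on the diagonal), and since $M$ is assumed nonsingular we have $c:=a_1+a_2+a_3\neq 0$. The matrix is not MDS (it has a zero $1\times 1$ minor), so by Lemma~\ref{Lemma_nMDS_charac} it suffices to show that for $g=1,2,3$ every $g\times(g+1)$ and every $(g+1)\times g$ submatrix contains a nonsingular $g\times g$ submatrix. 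Because $M$ is symmetric, each $(g+1)\times g$ submatrix is the transpose of a $g\times(g+1)$ submatrix, and transposition preserves nonsingularity of the $g\times g$ blocks; hence I only need to treat the $g\times(g+1)$ case for each $g$.

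The cases $g=1$ and $g=3$ are short. For $g=1$, a $1\times 2$ submatrix consists of two entries from a single row; since each row has only one zero, at least one of them is nonzero, giving the required nonsingular $1\times 1$ block. For $g=3$, I would reuse the computation from the proof of Theorem~\ref{Th_Hadamard_singular}, namely $\text{adj}(M)=c^{\,n-2}M=c^2M$ for $n=4$. Over $\mathbb{F}_{2^r}$ the sign conventions disappear, so the $3\times 3$ minor obtained by deleting row $j$ and column $i$ equals $\text{adj}(M)_{i,j}=c^2 M_{i,j}$; as $c\neq 0$ this is nonzero exactly when $M_{i,j}\neq 0$. A $3\times 4$ submatrix is obtained by deleting one row $j$, and its four $3\times 3$ minors are the $c^2 M_{i,j}$ for $i=1,\dots,4$; since column $j$ of $M$ has three nonzero entries, at least one of these minors is nonzero.

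The case $g=2$ is the heart of the argument, and the point that needs care is that the hypothesis only requires $a_i\in\mathbb{F}_{2^r}^\star$, not that the $a_i$ be distinct, so any minor of ``difference type'' $a_pa_q+a_sa_t$ could in principle vanish. The plan is to avoid such minors entirely and always exhibit a \emph{product-type} minor. Given a $2\times 3$ submatrix on rows $\{i,j\}$ and columns $S$ with $|S|=3$, a counting argument gives $|\{i,j\}\cap S|\ge 2+3-4=1$, so some row index, say $i$, lies in $S$; choosing any other column $m\in S\setminus\{i\}$, the $2\times 2$ minor on rows $(i,j)$ and columns $(i,m)$ is
\[
\det\begin{bmatrix} M_{i,i} & M_{i,m}\\ M_{j,i} & M_{j,m}\end{bmatrix}
=M_{i,i}M_{j,m}+M_{i,m}M_{j,i}=M_{i,m}M_{j,i},
\]
using $M_{i,i}=0$ and characteristic $2$. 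Since $m\neq i$ and $j\neq i$, neither factor lies on the diagonal, so both are nonzero and the minor is nonzero, irrespective of any coincidences among the $a_i$. This disposes of every $2\times 3$ submatrix, and with it the last case, completing the verification. I expect locating this uniform product-type minor --- rather than grinding through the $\binom{4}{2}\binom{4}{3}$ individual $2\times 3$ blocks and separately handling the degenerate equalities $a_p=a_q$ --- to be the main conceptual obstacle.
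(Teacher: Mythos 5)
Your proof is correct and follows essentially the same route as the paper's: reduce by symmetry to the $g\times(g+1)$ submatrices, handle the $2\times 3$ case by exhibiting a product-type $2\times 2$ minor involving a diagonal zero, and handle the $3\times 4$ case via the adjugate identity $\mathrm{adj}(M)=c^{2}M$ from Theorem~\ref{Th_Hadamard_singular}. Your $g=2$ argument is merely a more explicit version of the paper's observation that every $2\times 3$ block contains a $2\times 2$ submatrix with one or two (anti-)diagonal zeros.
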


\begin{proof}
	Let $M=\hada(0,a_1,a_2,a_3)$ be a nonsingular matrix. Then we have $a_1+a_2+a_3\neq 0$. Since $a_i\neq 0$ and $H$ is symmetric, to show $M$ is an NMDS matrix, we need to show that each $2\times 3$ and $3\times 4$ submatrices is of full rank.

	First, consider any $2\times 3$ submatrix. It contains a $2\times 2$ submatrix which contain either two zero (situated diagonally or anti-diagonally) or one zero. Since $a_i\neq 0$, this $2\times 2$ submatrix is always nonsingular. Therefore, each $2\times 3$ submatrix is of full rank.

	Now, as discussed in the proof of Theorem~\ref{Th_Hadamard_singular}, the determinant of any $3\times 3$ submatrices is of the form 
	\[(a_1+a_2+a_3)^2\cdot (M)_{i,j}.\]
	Since in any row (or column), $(M)_{i,j}= 0$ exactly once, any $3\times 4$ submatrix has three nonsingular $3\times 3$ submatrix i.e., in other words any $3\times 4$ submatrix is of full rank. Hence, by Lemma~\ref{Lemma_nMDS_charac}, $M$ is an NMDS matrix.
	\qed
\end{proof}

In the above lemma, the position of zeros can be adjusted such that there is exactly one zero per row and per column. This can be achieved through multiplying some permutation matrix to the matrix $M$. Thus, by Lemma~\ref{Lemma_PMQ_MDS}, $M$ will remain an NMDS matrix. Therefore, we can conclude that in a $4 \times 4$ Hadamard matrix $\hada(a_1, a_2, a_3, a_4)$, if $a_1 + a_2 + a_3 + a_4 \neq 0$ and exactly one $a_i = 0$, the matrix will remain an NMDS matrix. We state this result formally in the following proposition.

\begin{proposition}
	Any nonsingular $4 \times 4$ Hadamard matrix $\hada(a_1, a_2, a_3, a_4)$, where exactly one $a_i = 0$, is always an NMDS matrix.
\end{proposition}

\begin{corollary}\label{Coro_4Hadamard_counting}
	The number of $4 \times 4$ Hadamard NMDS matrices $\hada(\splitatcommas{a_1, a_2, a_3, a_4})$, with exactly one $a_i = 0$, over $\mathbb{F}_{2^r}$ is given by 
	\[4(2^r - 1)(2^{2r}-3\cdot 2^r+3).\]
\end{corollary}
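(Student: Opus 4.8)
The plan is to combine Theorem~\ref{Th_4Hadamard_NMDS} with an elementary counting argument over the three nonzero entries. By Theorem~\ref{Th_4Hadamard_NMDS}, any $4\times 4$ Hadamard matrix $\hada(a_1,a_2,a_3,a_4)$ with exactly one $a_i=0$ is NMDS as soon as it is nonsingular; conversely, a singular Hadamard matrix is never NMDS by Theorem~\ref{Th_Hadamard_singular}. Hence the quantity to be counted is exactly the number of such Hadamard matrices that are nonsingular. Recalling that $H^2=c^2I_n$ with $c=a_1+a_2+a_3+a_4$, and that (as observed in the proof of Theorem~\ref{Th_Hadamard_singular}) $\det(M)=0$ if and only if $c=0$, nonsingularity is equivalent to the single condition $a_1+a_2+a_3+a_4\neq 0$.

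I would then stratify the enumeration according to which of the four positions carries the unique zero, contributing a factor of $4$, since distinct first rows yield distinct Hadamard matrices and a zero in a fixed first-row position cannot coincide with one in another position. Fixing, say, $a_1=0$, the remaining entries must satisfy $a_2,a_3,a_4\in\mathbb{F}_{2^r}^\star$, and the nonsingularity condition reduces to $a_2+a_3+a_4\neq 0$. It thus remains to count the triples $(a_2,a_3,a_4)\in(\mathbb{F}_{2^r}^\star)^3$ with nonzero sum.

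For this last step I would use complementary counting: the total number of nonzero triples is $(2^r-1)^3$, while the number with $a_2+a_3+a_4=0$ equals $(2^r-1)(2^r-2)$, obtained by choosing $a_2\in\mathbb{F}_{2^r}^\star$ freely and $a_3\in\mathbb{F}_{2^r}^\star$ with $a_3\neq a_2$ (so that $a_4=a_2+a_3$ is forced and automatically nonzero). Subtracting and simplifying gives $(2^r-1)\bigl(2^{2r}-3\cdot 2^r+3\bigr)$ per position, and multiplying by $4$ produces the stated formula.

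There is no genuine obstacle here; the only point requiring care is that, unlike in the MDS setting of Lemma~\ref{Lemma_Hadamard_MDS}, the nonzero entries need \emph{not} be distinct for the NMDS property to hold. This is precisely the content of Theorem~\ref{Th_4Hadamard_NMDS}, so the enumeration may ignore distinctness and simply count all nonzero triples with nonzero sum.
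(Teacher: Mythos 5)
Your proposal is correct and follows essentially the same route as the paper's proof: a factor of $4$ for the position of the zero, reduction of the NMDS condition to nonsingularity (i.e., $a_2+a_3+a_4\neq 0$) via Theorems~\ref{Th_4Hadamard_NMDS} and~\ref{Th_Hadamard_singular}, and then a count of nonzero triples with nonzero sum. The only cosmetic difference is that you count the complement $(2^r-1)(2^r-2)$ of triples summing to zero and subtract from $(2^r-1)^3$, whereas the paper splits into the cases $a_2=a_3$ and $a_2\neq a_3$; both yield $(2^r-1)(2^{2r}-3\cdot 2^r+3)$ per position.
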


\begin{proof}
	We can choose the position of zero in the Hadamard matrix in four different ways. Let us assume that $a_1 = 0$.  Now, for the Hadamard matrix $\hada(0, a_2, a_3, a_4)$ to be NMDS, it must be nonsingular. Thus, we require
	\[
		a_2 + a_3 + a_4 \neq 0.
	\]

	Since $a_2, a_3, a_4 \in \mathbb{F}_{2^r}^*$, each of $a_2$ and $a_3$ has $2^r - 1$ choices over $\mathbb{F}_{2^r}$. Additionally, for the matrix to remain nonsingular, we must have $a_4 \neq a_2 + a_3$. 

	However, when $a_2 = a_3$, we find $a_2 + a_3 = 0$. In this case, we can choose $a_4$ in $2^r - 1$ ways. Thus, we have two cases: when $a_3 \neq a_2$, $a_4$ has $2^r - 2$ choices, and when $a_3 = a_2$, $a_4$ has $2^r - 1$ choices.

	Therefore, for the Hadamard matrix to be NMDS, the total number of possible choices for $a_2, a_3, a_4$ is given by

	\begin{equation*}
		\begin{aligned}
			&(2^r - 1) \left[ (2^r - 2) \cdot (2^r - 2) + 1 \cdot (2^r - 1) \right] \\
			&= (2^r - 1) \left( 2^{2r} - 3 \cdot 2^r + 3 \right).
		\end{aligned}
	\end{equation*}

	Hence, the number of $4 \times 4$ Hadamard NMDS matrices $\hada(a_1, a_2, a_3, a_4)$, with exactly one $a_i = 0$, over $\mathbb{F}_{2^r}$ is given by 
	\[
	4(2^r - 1)(2^{2r} - 3 \cdot 2^r + 3).
	\]
	\qed
\end{proof}

Similar to the proof of Theorem~\ref{Thm_Inv_Hadamard_MDS_Counting}, we can assert that the number of Hadamard NMDS matrices with a row sum of 1 is exactly the total number of Hadamard NMDS matrices divided by $2^r - 1$. Therefore, by the above corollary, we conclude that the number of $4 \times 4$ involutory Hadamard NMDS matrices $\hada(a_1, a_2, a_3, a_4)$ with exactly one $a_i = 0$ over $\mathbb{F}_{2^r}$ is $4(2^{2r}-3\cdot 2^r+3)$.

\begin{corollary}\label{Coro_4INVHadamard_counting}
	The number of $4 \times 4$ involutory Hadamard NMDS matrices $\hada(\splitatcommas{a_1, a_2, a_3, a_4})$, with exactly one $a_i = 0$, over $\mathbb{F}_{2^r}$ is given by
	\[4(2^{2r}-3\cdot 2^r+3).\]
\end{corollary}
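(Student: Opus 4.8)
The plan is to reuse the scaling bijection that underlies the proof of Theorem~\ref{Thm_Inv_Hadamard_MDS_Counting}, now applied to the family already enumerated in Corollary~\ref{Coro_4Hadamard_counting}. I would begin by recalling that $\hada(a_1,a_2,a_3,a_4)$ is involutory precisely when its row sum $c = a_1+a_2+a_3+a_4$ equals $1$, so the involutory members of our family are exactly those with $c = 1$. Hence the task reduces to understanding how the count $4(2^r-1)(2^{2r}-3\cdot 2^r+3)$ of Hadamard NMDS matrices with exactly one zero distributes across the possible values of the row sum.

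The key step is to show that scalar multiplication acts freely on the admissible row sums. Given such a matrix $M = \hada(a_1,a_2,a_3,a_4)$ with exactly one $a_i = 0$ and row sum $c$, and any $t \in \mathbb{F}_{2^r}^\star$, the matrix $tM = \hada(ta_1,ta_2,ta_3,ta_4)$ is again Hadamard, keeps its single zero in the same position (since $t \neq 0$), and remains NMDS by Lemma~\ref{Lemma_DMD_MDS} taking $D_1 = tI_4$ and $D_2 = I_4$; moreover its row sum is $tc$. Because the NMDS property forces nonsingularity here (Theorem~\ref{Th_4Hadamard_NMDS}) and a Hadamard matrix is singular exactly when $c = 0$, every matrix in the family satisfies $c \neq 0$. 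Consequently the correspondence $M \mapsto (s\,c^{-1})M$ is a bijection between the matrices of row sum $c$ and those of row sum $s$, for any nonzero target $s$.

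It follows that the whole family partitions into $2^r-1$ equinumerous classes indexed by the nonzero row sum, and the involutory matrices constitute exactly the class with row sum $1$. Dividing the total of Corollary~\ref{Coro_4Hadamard_counting} by $2^r-1$ then yields $4(2^{2r}-3\cdot 2^r+3)$, as claimed. I do not expect a serious obstacle in this argument; the only point that genuinely needs care is verifying that scaling preserves all three defining properties \emph{simultaneously} — the Hadamard form, the property of having exactly one zero entry, and the NMDS property — so that the bijection is defined on precisely the set being counted rather than on a larger or smaller set.
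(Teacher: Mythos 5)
Your proposal is correct and follows essentially the same route as the paper, which likewise obtains the count by arguing that scaling gives a bijection between row-sum classes, so the involutory (row sum $1$) matrices make up exactly a $1/(2^r-1)$ fraction of the total from Corollary~\ref{Coro_4Hadamard_counting}. Your explicit verification that scaling preserves the Hadamard form, the single-zero position, and the NMDS property, and that the row sum is necessarily nonzero, is a welcome elaboration of details the paper leaves implicit.
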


Now, we will discuss the result related to $4\times 4$ circulant NMDS matrices with one zero per row or column.

\begin{lemma}\label{Lemma_4circulant_NMDS}
	Any $4\times 4$ circulant matrix $Circ(0,a_1,a_2,a_3)$, where $a_i \in \mathbb{F}_{2^r}^*$ for $i = 1, 2, 3$, is an NMDS matrix over the field $\mathbb{F}_{2^r}$.
\end{lemma}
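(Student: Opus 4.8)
The plan is to verify the three instances ($g=1,2,3$) of the characterization in Lemma~\ref{Lemma_nMDS_charac} for $M = Circ(0,a_1,a_2,a_3)$. First I would record the basic structure: $M$ has a zero entry, so it is certainly not MDS, and its only zeros lie on the main diagonal, so each row and each column has exactly one zero and three nonzero entries drawn from $\{a_1,a_2,a_3\}$. The case $g=1$ is then immediate, since any $1\times 2$ or $2\times 1$ submatrix sits inside a single row or column, which has at most one zero and hence a nonsingular $1\times 1$ entry.

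For $g=2$ I would show that every $2\times 3$ and every $3\times 2$ submatrix has rank $2$. Fix two distinct rows $i,i'$ and a set $S$ of three columns. Since $S$ omits only one column, it contains at least one of $i,i'$, say $i\in S$. If the two rows were dependent on $S$ they would be proportional, $(\text{row }i')|_S=\lambda(\text{row }i)|_S$ with $\lambda\neq 0$ (each restricted row has at least two nonzero entries, hence is nonzero); evaluating at column $i$ gives $M_{i',i}=\lambda M_{i,i}=0$, contradicting the fact that the unique zero of row $i'$ sits at column $i'\neq i$. Thus the two rows are independent and some $2\times 2$ minor is nonzero. The identical argument on columns (the unique zero of column $j$ is at row $j$) handles the $3\times 2$ submatrices.

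The case $g=3$ is the crux, because -- unlike the symmetric Hadamard situation of Theorem~\ref{Th_Hadamard_singular} -- the matrix is not symmetric and may be singular, so I cannot collapse all $3\times 3$ minors into a single scalar multiple of $M$. Here I would prove that every three rows, and every three columns, are linearly independent. Using the circulant determinant formula $\det M=(a_1+a_2+a_3)^4$ over $\mathbb{F}_{2^r}$, I split into two cases. If $a_1+a_2+a_3\neq 0$ then $M$ is nonsingular, so every three rows are trivially independent. If $a_1+a_2+a_3=0$ then $a_1\neq a_3$ (otherwise $a_2=0$), so the top-left $3\times 3$ minor equals $a_2(a_1+a_3)^2\neq 0$ and $\rank M=3$; moreover all row sums and all column sums equal $a_1+a_2+a_3=0$, so the all-ones vector lies in both the (one-dimensional) left and right kernels of $M$. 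Consequently every linear dependence among the rows, or among the columns, is a scalar multiple of the all-ones relation and thus involves all four rows (resp.\ columns); no three of them can be dependent. In either case every $3\times 4$ and $4\times 3$ submatrix has a nonsingular $3\times 3$ submatrix.

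Having verified $g=1,2,3$, Lemma~\ref{Lemma_nMDS_charac} gives that $M$ is NMDS. The main obstacle is exactly the singular subcase of $g=3$: the substance is to certify that the lone dependence among the four rows/columns is the full-support all-ones relation, which rests on the determinant formula together with one explicit nonzero $3\times 3$ minor. (Alternatively, since $\mathrm{adj}(M)$ is again circulant and nonzero whenever $\rank M\geq 3$, each of its rows and columns contains a nonzero cofactor, yielding the same $g=3$ conclusion.)
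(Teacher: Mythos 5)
Your proof is correct, and for the decisive $g=3$ step it takes a genuinely different route from the paper. The paper handles $g=3$ by brute force plus symmetry: it writes down the $3\times 4$ submatrix $N$ consisting of the first three rows, deletes the second column to get a $3\times 3$ block $N'$ with $\det(N')=a_2^3\neq 0$, and then observes that the remaining $3\times 4$ (and $4\times 3$) submatrices are column permutations of the same circulant pattern, so each is of full rank regardless of whether $M$ itself is singular. You instead argue structurally: $\det M=(a_1+a_2+a_3)^4$ over characteristic $2$, and in the singular case you locate a nonzero $3\times 3$ minor $a_2(a_1+a_3)^2$ to pin $\rank M=3$ and then show the one-dimensional left and right kernels are spanned by the all-ones vector, so any dependence has full support and no three rows or columns can be dependent. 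Your version buys a conceptual explanation of \emph{why} singular circulants of this shape still pass the NMDS test (the lone dependence is the full-support one), and your adjugate remark is a clean alternative; the paper's version buys brevity, a single explicit certificate ($a_2^3\neq 0$) that works uniformly without any case split on $a_1+a_2+a_3$. Your $g=1$ and $g=2$ arguments match the paper's in substance (the paper exhibits a $2\times 2$ minor containing a zero, you phrase it as a proportionality contradiction using the position of the diagonal zeros); both are fine.
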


\begin{proof}
	Let $M = \text{Circ}(0, a_1, a_2, a_3)$ be a $4 \times 4$ circulant matrix over $\mathbb{F}_{2^r}$. Since each $a_i \in \mathbb{F}_{2^r}^*$, we can assert that any $1 \times 2$ (or $2 \times 1$) submatrix is of full rank. Thus, we need to show that each $2 \times 3$ (or $3 \times 2$) and $3 \times 4$ (or $4 \times 3$) submatrix is of full rank.

	Due to the structure of $M$ any $2 \times 3$ (or $3 \times 2$) contain a $2\times 2$ submatrix which contain either two zero (situated diagonally or anti-diagonally) or one zero. Since $a_i\neq 0$, this $2\times 2$ submatrix is always nonsingular. Therefore, each $2\times 3$  (or $3 \times 2$) submatrix is of full rank.

	Consider the $3\times 4$ submatrix
	\begin{equation*}
		\begin{aligned}
			N=&
			\begin{bmatrix}
				0 & a_{1} & a_{2} & a_{3} \\
				a_{3} & 0 & a_{1} & a_{2} \\
				a_{2} & a_{3} & 0 & a_{1}
			\end{bmatrix}.
		\end{aligned}
	\end{equation*}

	Now, if we consider the $3 \times 3$ submatrix $$N' = \begin{bmatrix} 0 & a_2 & a_3 \\ a_3 & a_1 & a_2 \\ a_2 & 0 & a_1 \end{bmatrix}$$ obtained from $N$ by deleting the second column of $N$, we have $\det(N') = a_2^3 \neq 0$. Therefore, $N$ is of full rank.


	Also, all the other three $3 \times 4$ submatrices of $M$ are of the structure obtained by column permutation on the submatrices of structure $N$. Therefore, each of them has rank $3$. Thus, any $3 \times 4$ submatrix of $M$ is of full rank. Similarly, we can show that any $4 \times 3$ submatrix of $M$ is of full rank. Therefore, by Lemma~\ref{Lemma_nMDS_charac}, we conclude that $M$ is an NMDS matrix.
	\qed
\end{proof}

Like Hadamard matrices, the zeros in the above lemma can be arranged so that there is exactly one zero in each row and each column. Thus, we have the following result.


\begin{proposition}
	Any $4 \times 4$ circulant matrix $M = Circ(a_1, a_2, a_3, a_4)$, where exactly one $a_i = 0$, is always an NMDS matrix.
\end{proposition}

\begin{corollary}\label{Coro_4circulant_counting}
	The number of $4 \times 4$ circulant NMDS matrices $Circ(a_1, a_2, a_3, a_4)$, with exactly one $a_i = 0$, over $\mathbb{F}_{2^r}$ is given by $4(2^r-1)^3$.
\end{corollary}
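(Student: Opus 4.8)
The plan is to combine Theorem~\ref{Th_4circulant_NMDS} with a direct counting argument, paralleling the structure of Corollary~\ref{Coro_4Hadamard_counting} but exploiting one key simplification. By Theorem~\ref{Th_4circulant_NMDS}, every $4\times 4$ circulant matrix $Circ(a_1,a_2,a_3,a_4)$ with exactly one $a_i=0$ is automatically NMDS. Hence the desired count reduces to enumerating circulant matrices of this shape, with no further spectral or determinantal conditions to impose.

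First I would fix which of the four entries is the zero one. Since we require \emph{exactly} one $a_i=0$, this gives four mutually exclusive cases, and the zero position is uniquely determined by the matrix itself, so these cases partition the set of admissible matrices without overlap. Within each case the remaining three entries range independently over $\mathbb{F}_{2^r}^\star$, contributing $(2^r-1)^3$ tuples. Multiplying the four positional choices by $(2^r-1)^3$ then yields $4\,(2^r-1)^3$, as claimed.

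The point worth emphasizing --- and essentially the only place where care is needed --- is the contrast with the Hadamard case of Corollary~\ref{Coro_4Hadamard_counting}. There, the NMDS property required nonsingularity, which forced the extra constraint $a_2+a_3+a_4\neq 0$ and produced the factor $(2^{2r}-3\cdot 2^r+3)$ in place of $(2^r-1)^2$. Here no such constraint arises: Lemma~\ref{Lemma_4circulant_NMDS} guarantees the NMDS property for \emph{every} choice of nonzero $a_1,a_2,a_3$, irrespective of whether the resulting circulant matrix is singular, consistent with the fact that an NMDS matrix may be singular. So the main obstacle is simply to confirm that no hidden condition further restricts the three nonzero entries; Theorem~\ref{Th_4circulant_NMDS} settles this outright, after which the count is a routine product.
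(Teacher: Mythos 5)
Your proposal is correct and matches the reasoning the paper leaves implicit: the corollary follows immediately from Theorem~\ref{Th_4circulant_NMDS} by choosing the zero position in $4$ ways and the remaining three entries freely in $(2^r-1)^3$ ways, with distinct first rows giving distinct circulant matrices. Your observation that, unlike the Hadamard case, no nonsingularity constraint is needed is exactly the right point to flag.
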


\begin{remark}\label{Remark_4NMDS}
	From Lemma~\ref{Lemma_4Hadamard_NMDS}, we know that for any values of $a_1, a_2, a_3$ satisfying $a_1 + a_2 + a_3 \neq 0$, we can obtain an NMDS matrix. By setting $a_1 = a_2 = a_3 = 1$, we recover the NMDS matrices used in~\cite{MIDORI,SKINNY}. This is also the circulant matrix derived from Lemma~\ref{Lemma_4circulant_NMDS}.
\end{remark}

\begin{remark}
	From Theorem~\ref{Th_Hadamard_singular}, we know that for a Hadamard matrix to be NMDS, it must be nonsingular. In contrast, circulant NMDS matrices can be singular. For instance, over $\mathbb{F}_{2^4}$, there are $840$ singular circulant NMDS matrices of order $4$ with one zero per row or column.
\end{remark}

\noindent It is noteworthy that the circulant MDS matrix $Circ(\alpha, 1+\alpha, 1, 1)$, which is utilized in the AES MixColumn operation, consists of elements with low multiplication cost; however, this matrix contains a total of $8$ entries that are equal to $1$. In~\cite{PASCAL}, Junod et al. established that the highest number of $1$'s in a $4 \times 4$ MDS matrix is $9$. To reach this result, they introduced a new class of efficient MDS matrices that incorporate circulant submatrices. Additionally, Gupta et al. formalized this as Type-I circulant-like matrices in~\cite{GR15}, conducting a comprehensive examination of these matrices for the development of efficient and perfect diffusion layers. Below, we present the definition of Type-I circulant-like matrices as outlined in~\cite{GR15,PASCAL}.

\begin{definition}\cite{GR15,PASCAL}\label{clm1}
	The $n \times n$ matrix
	$$
	\begin{bmatrix}
		a & \mathbf{1} \\
		\textbf{1}^T & A
	\end{bmatrix}
	$$
	is referred to as a Type-I circulant-like matrix, where $A = Circ(1, a_1, \ldots, a_{n-2})$, $\mathbf{1} = \underbrace{(\splitatcommas{1, \ldots, 1})}_{\text{\tiny{n-1 times}}}$, $1$ represents the unit element, and both $a_i$'s and $a$ are any nonzero elements from the underlying field other than $1$. This matrix is denoted as $TypeI(a, Circ(1, a_1, \ldots, a_{n-2}))$.
\end{definition}

In~\cite{GR15}, it is established that there is no involutory Type-I circulant-like matrix of even order, while~\cite{MDS_Survey} demonstrated the absence of such matrices for odd order as well. We combine these findings in the following lemma.

\begin{lemma}\cite{MDS_Survey,GR15}
	Any Type-I circulant-like matrix over $\mathbb{F}_{2^r}$ cannot be involutory. 
\end{lemma}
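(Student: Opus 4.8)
The plan is to show that the involutory requirement $M^2 = I_n$ is already violated by a single diagonal entry, so that no detailed analysis of the lower-right circulant block is needed. I would write the matrix in block form as
\[
M = \begin{bmatrix} a & \mathbf{1} \\ \mathbf{1}^T & A \end{bmatrix},
\]
where $a$ is the scalar top-left entry, $\mathbf{1}$ is the all-ones row vector of length $n-1$, and $A = Circ(1, a_1, \ldots, a_{n-2})$. Squaring block-wise, the $(1,1)$ scalar entry of $M^2$ equals $a^2 + \mathbf{1}\,\mathbf{1}^T = a^2 + (n-1)\cdot 1$, where $\mathbf{1}\,\mathbf{1}^T$ is a sum of $n-1$ copies of the unit and therefore collapses to $(n-1)\bmod 2$ in characteristic $2$.

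If $M$ were involutory, this entry would have to match the $(1,1)$ entry of $I_n$, namely $1$, giving $a^2 = 1 + \big((n-1)\bmod 2\big)$. I would then split on the parity of $n$. When $n$ is even, $n-1$ is odd, so $a^2 = 1 + 1 = 0$, forcing $a = 0$; when $n$ is odd, $n-1$ is even, so $a^2 = 1$, and since $x \mapsto x^2$ is a bijection on $\mathbb{F}_{2^r}$ for which $1$ is the unique square root of $1$, this forces $a = 1$. In either case $a \in \{0,1\}$, contradicting the defining hypothesis that $a$ is a nonzero field element different from $1$. Hence no Type-I circulant-like matrix over $\mathbb{F}_{2^r}$ can be involutory.

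I expect essentially no obstacle here: the only content is the two characteristic-$2$ observations that $\mathbf{1}\,\mathbf{1}^T$ reduces to a parity bit and that squaring is injective, together with the recognition that the $(1,1)$ entry alone already forces the contradiction. This single computation unifies the separate even-order and odd-order treatments of \cite{GR15} and \cite{MDS_Survey}. The one thing I would verify carefully is the block multiplication and the parity bookkeeping; notably, no information about the row sums of $A$ or its image in $\mathbb{F}_{2^r}[x]/(x^{n-1}-1)$ is required for the conclusion.
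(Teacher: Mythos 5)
Your proof is correct. Note that the paper itself gives no proof of this lemma: it simply combines two cited results, the even-order case from \cite{GR15} and the odd-order case from \cite{MDS_Survey}. Your computation of the $(1,1)$ entry of $M^2$ is accurate --- it equals $a^2 + (n-1)\cdot 1$, which in characteristic $2$ forces $a=0$ when $n$ is even and $a=1$ when $n$ is odd (using injectivity of the Frobenius map), and both values are excluded by Definition~\ref{clm1}. What your argument buys is a single unified, self-contained derivation replacing two separate citations, and it needs nothing beyond one diagonal entry; the only mild caveat is that your claim to have ``unified'' the literature should acknowledge that the even-order half of your computation is essentially the same observation made in \cite{GR15}, so the genuinely new economy is in handling the odd case by the identical entry rather than by a separate structural argument about $A$.
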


In~\cite{MDS_Survey}, it was shown that a Type-I circulant-like matrix of odd order cannot be orthogonal, while~\cite{GR15} demonstrated that a Type-I circulant-like matrix of even order cannot be both orthogonal and MDS. However, we will now show that this result does not hold for NMDS matrices. Specifically, in the following result, we prove that there are exactly two $4 \times 4$ orthogonal Type-I circulant-like NMDS matrices over the field $\mathbb{F}_{2^r}$.

\begin{proposition}
    There exist exactly two $4 \times 4$ orthogonal Type-I circulant-like NMDS matrices over $\mathbb{F}_{2^r}$.
\end{proposition}

\begin{proof}
	Let $M= TypeI(a, Circ(1, a_1,a_2))$ be an orthogonal matrix over $\mathbb{F}_{2^r}$. Thus, $M\cdot M^T=I_4$ implies that $a^2+1=1\implies a=0$.

	Also, given $a = 0$, we have the following system of equations:
	\begin{equation*}
		\begin{aligned}
			& a_1^2+a_2^2=1\\
			& a_1a_2+a_1+a_2+1=0\\
			\implies & a_1+a_2=1 \quad \text{ and } \quad (a_1+1)(a_2+1)=0\\
			\implies & (a_1,a_2)=(0,1) \quad \text{ or }\quad (a_1,a_2)=(1,0).
		\end{aligned}
	\end{equation*}
	Thus, $M$ takes the form
	\[
	\begin{bmatrix}
		0 & 1 & 1 & 1 \\
		1 & 1 & 0 & 1 \\
		1 & 1 & 1 & 0 \\
		1 & 0 & 1 & 1
	\end{bmatrix} ~~~\text{or}~~
	\begin{bmatrix}
		0 & 1 & 1 & 1 \\
		1 & 1 & 1 & 0 \\
		1 & 0 & 1 & 1 \\
		1 & 1 & 0 & 1
	\end{bmatrix}.
	\]

	Also, these two matrices are row-permuted versions of the NMDS matrices discussed in Remark~\ref{Remark_4NMDS}, and hence, they are NMDS matrices. Therefore, there are exactly two $4 \times 4$ orthogonal Type-I circulant-like NMDS matrices over $\mathbb{F}_{2^r}$.
	\qed
\end{proof}

\noindent Involutory MDS (or NMDS) matrices are of particular interest. However, circulant matrices cannot simultaneously be both involutory and MDS. To address this limitation, the authors in~\cite{GR15} proposed a new category of circulant-like matrices that are inherently involutory, referring to them as Type-II circulant-like matrices.

\begin{definition}\label{clm2}\cite{GR15}
	The $2n \times 2n$ matrix
	$$
	\begin{bmatrix}
	A & A^{-1} \\
	A^3 + A & A \\
	\end{bmatrix}
	$$
	is called a Type-II circulant-like matrix, where $A = Circ(a_0, \ldots, a_{n-1})$. This matrix is denoted as $TypeII(Circ(a_0, \ldots, a_{n-1}))$.
\end{definition}

The authors of \cite{GR15} demonstrated that when $n$ is even, the corresponding $2n \times 2n$ Type-II circulant-like matrix is not MDS. In the following theorem, we will show that this result also holds for NMDS matrices.

\begin{theorem}\label{Th_TypeII_NMDS}
	Any $2n \times 2n$ Type-II circulant-like matrix over $\mathbb{F}_{2^r}$ can never be an NMDS matrix for even values of $n$.
\end{theorem}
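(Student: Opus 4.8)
The plan is to apply the characterization of Lemma~\ref{Lemma_nMDS_charac}. Write $M=TypeII(Circ(a_0,\ldots,a_{n-1}))$ with $A=Circ(a_0,\ldots,a_{n-1})$, and record that $M$ is the involutory matrix $\begin{bmatrix} A & A^{-1} \\ A^3+A & A\end{bmatrix}$, where in characteristic $2$ we have $A^3+A=(A^2+I_n)A=(A+I_n)^2A$. Since $M^2=I_{2n}$ we get $\det M=1$, so $M$ is nonsingular and in particular every $(2n-1)\times 2n$ and $2n\times(2n-1)$ submatrix has full rank; hence any failure of the NMDS property must sit at an intermediate level. Because $M$ is not MDS for even $n$ (the result of~\cite{GR15} recalled just above), Lemma~\ref{Lemma_nMDS_charac} applies, and to prove that $M$ is never NMDS it suffices to exhibit, for every even $n$, an index $g$ together with a $g\times(g+1)$ (or $(g+1)\times g$) submatrix of $M$ of rank at most $g-1$; equivalently, $g$ rows that become linearly dependent once restricted to a suitable set of $g+1$ columns.

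The structural input special to even $n$ is the following. Let $P=Circ(0,1,0,\ldots,0)$ be the cyclic shift, so that $A$ is a polynomial in $P$ and $P^{n}=I_n$. For $n=2m$ we have, over characteristic $2$, $(I_n+P^{m})^2=I_n+P^{2m}=\mathbf 0$, so $B:=I_n+P^{m}$ is a nonzero nilpotent of rank $m$ that commutes with every circulant matrix and satisfies $\operatorname{im}B=\ker B=W$, the $m$-dimensional space of $m$-periodic vectors. This is the characteristic-$2$ substitute for the classical ``alternating vector'' dependency (which collapses here, since $+1=-1$), and I claim it is the genuine source of the defect for even $n$. Its effect on $M$ is concrete: for any circulant $C$ the $j$-th and $(j+m)$-th columns of $CB$ coincide, because column $j$ of $CB$ equals $Ce_j+Ce_{j+m}$. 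Applying this with $C\in\{A,\,A^{-1},\,A^3+A\}$ forces several ``column sums'' inside $M$ (a left-block column plus its shift-by-$m$ partner) to be equal. I would also use the two block identities bottom-left $=(A+I_n)^2\cdot(\text{top-left})$ and bottom-right $=A^{2}\cdot(\text{top-right})$, immediate from $A^3+A=(A^2+I_n)A$ and $A=A^{2}A^{-1}$, together with the fact that $A$, $A^{-1}$ and $A^3+A$ all leave $W$ invariant.

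Building on this, the plan is to begin from the singular square submatrix that underlies the~\cite{GR15} argument that $M$ is not MDS for even $n$, and then to \emph{thicken} it. Using the pairing of columns induced by $B$ and the block identities, I would show that the column of $M$ produced by the shift-by-$m$ relation already lies in the span — \emph{after restriction to the chosen rows} — of the columns of the singular submatrix. Adjoining that column then turns a singular $g\times g$ minor into a $g\times(g+1)$ submatrix whose rank is still at most $g-1$, which is exactly the configuration forbidden for an NMDS matrix. The invariance of $W$ under $A$, $A^{-1}$ and $A^3+A$ is what makes this dependency structural (entry-independent) rather than accidental, and hence valid for all admissible choices of $a_0,\ldots,a_{n-1}$.

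The main obstacle is precisely this thickening step. A single vanishing minor only contradicts the MDS property; to contradict the strictly weaker NMDS property one must produce a genuine \emph{rectangular} rank deficiency, i.e.\ control the corank of the relevant block and verify that the extra, shift-induced column really does land in the column span within the selected rows. Making this uniform in $n$ requires pinning down exactly which block and which rows to select as a function of $n$, and here the parity of $m=n/2$ (equivalently $n\equiv 0$ or $2\pmod 4$) and the smallest cases must be handled separately, since for them the nilpotent $B$ has rank too small for the construction to leave room for the additional column. I expect the careful bookkeeping of these cases, rather than any single conceptual step, to be where the real work lies.
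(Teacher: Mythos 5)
You have set up the right target---by Lemma~\ref{Lemma_nMDS_charac} it suffices to exhibit, for some $g$, a $(g+1)\times g$ or $g\times(g+1)$ submatrix of $M$ all of whose $g\times g$ minors vanish---and you are right that the parity of $n$ must enter through a characteristic-$2$ identity among circulants. But the proposal stops exactly where the proof has to start: no such submatrix is ever exhibited, and you yourself flag the ``thickening'' step as the unresolved obstacle, which is precisely the content of the theorem. Moreover, the mechanism you nominate as ``the genuine source of the defect,'' the nilpotent $B=I_n+P^{n/2}$ with its column-pairing property for products $CB$, is never connected to $M$: none of the blocks $A$, $A^{-1}$, $A^3+A=A(I_n+A)^2$ has the form $CB$ for a circulant $C$, and the invariance of $\ker B$ under all circulants is a statement about a subspace, not about a specific set of columns of $M$ becoming dependent on a specific set of $g+1$ rows. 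As written, the argument would not go through, and the anticipated case analysis on $n\bmod 4$ and on small $n$ is a symptom of having the wrong structural identity.

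The identity that actually does the work is different and more elementary. Write $n=2d$ and $A=\sum_{i}a_iP^i$. Squaring in characteristic $2$ kills all cross terms, and since $P^{2d}=I_n$ the exponents $2i$ only sweep out even residues, so $A^2=\sum_{k=0}^{d-1}(a_k^2+a_{k+d}^2)P^{2k}$ is supported on \emph{even} powers of the shift. Hence the first column of $A^3=A\cdot A^2$, and therefore of $A^3+A$, is a linear combination of the $d$ even-indexed columns $C_0,C_2,\dots,C_{2d-2}$ of $A$. Inside the bottom half of $M$ this produces $d+1$ columns (column $0$, read off the block $A^3+A$, together with the even-indexed columns of the bottom-right block $A$) satisfying a fixed linear relation as vectors of length $n$; restricting them to any $d+2$ of the bottom rows gives a $(d+2)\times(d+1)$ submatrix in which every $(d+1)\times(d+1)$ minor vanishes, violating Lemma~\ref{Lemma_nMDS_charac} with $g=d+1$, uniformly in the entries $a_0,\dots,a_{n-1}$ and with no case split. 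This explicit choice of rows, columns, and dependency is the missing core of your argument.
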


\begin{proof}
	Let $M = TypeII(Circ(a_0, \ldots, a_{n-1})$) be a $2n \times 2n$ Type-II circulant-like matrix, where $n = 2d$. Assume that $A = Circ(a_0, \ldots, a_{2d-1})$. Therefore, we have
	\begin{equation*}
		\begin{aligned}
			A^2 
			&= Circ(a_0^2 + a_d^2, 0, a_1^2 + a_{d+1}^2, 0, \ldots, a_{d-1}^2 + a_{2d-1}^2, 0).\\
			&= Circ(b_0, 0, b_1, 0, \ldots, b_{d-1}, 0),
		\end{aligned}
	\end{equation*}
	where $b_i = a_i^2 + a_{d+i}^2$ for $i = 0, \ldots, d-1$. Therefore, the first column of $A^2$ is $[b_0,0,b_{d-1},\ldots,b_{1},0]^T$.
	
	Now, let $A^3= Circ(e_0, e_1, \ldots, e_{2d-1})$ and denote the columns of $A$ as $C_0, C_1, \ldots, C_{2d-1}$. Now, we will show that the first column of $A^3$ are the linear combination of $C_0,C_2,\ldots,C_{2d-2}$.

	Since $A^3= A\cdot A^2$, the first column of $A^3$ is given by
	\begin{equation*}
		\begin{aligned}
			& \begin{bmatrix}
				e_0\\
				e_{2d-1}\\
				e_{2d-2}\\
				\vdots\\
				e_2\\
				e_1\\
			\end{bmatrix}
			= A \cdot 
			\begin{bmatrix}
				b_0\\
				0\\
				b_{d-1}\\
				\vdots \\
				b_1\\
				0\\
			\end{bmatrix}\\
			&= b_0 \cdot C_0 + b_{d-1} \cdot C_2+ \cdots + b_1\cdot C_{2d-2}.
		\end{aligned}
	\end{equation*}

	Therefore, the first column of $A^3 + A$ is given by
	$$
	\mathcal{C}_0 = (b_0 + 1)C_0 + b_{d-1}C_2 + b_{d-2}C_4 + \ldots + b_1 C_{2d-2}.
	$$

	This expression shows that $\mathcal{C}_0$ is a linear combination of the columns $C_0, C_2, \ldots, C_{2d-2}$. Consequently, the set of $d+1$ columns, $\mathcal{C}_0$ along with $C_0, C_2, \ldots, C_{2d-2}$, are linearly dependent.

	Now consider the $(d+2) \times (d+1)$ submatrix $N$ of $M$, constructed from the rows $R_{2d}$, $R_{2d+1},\ldots, R_{2d+d+1}$ and the columns $T_0$, $T_{2d}$, $T_{2d+2},\ldots, T_{2d+2d-2}$, where $R_i$ denotes the $i$-th row and $T_j$ denotes the $j$-th column of $M$.

	As shown earlier, the $d+1$ columns of $N$ are linearly dependent. Moreover, any $(d+1) \times (d+1)$ submatrix of $N$ consists of the same $d+1$ columns of $N$, but with one row omitted, and thus will also be linearly dependent. In other words, the $(d+2) \times (d+1)$ submatrix $N$ has rank less than $d+1$. Therefore, by Lemma~\ref{Lemma_nMDS_charac}, we conclude that $M$ is not an NMDS matrix.
	\qed
\end{proof}

\begin{remark}
	It is important to note that when $n$ is odd, a $2n \times 2n$ Type-II circulant-like matrix over $\mathbb{F}_{2^r}$ can be an NMDS matrix. For example, consider the $6 \times 6$ matrix $M = \text{TypeII}(Circ(1, \alpha, \alpha))$ over the finite field $\mathbb{F}_{2^4}$, where $\alpha$ is a primitive element and a root of the constructing polynomial $x^4 + x + 1$. It can be checked that the matrix $M$ is an NMDS matrix.
\end{remark}


\section{Conclusion}\label{Sec:Conclusion}
This paper focuses on two primary objectives. First, it establishes explicit formulas for counting Hadamard MDS and involutory Hadamard MDS matrices of order $4$ over the finite field $\mathbb{F}_{2^r}$, followed by the study of Near-MDS (NMDS) matrices derived from Hadamard and circulant-like matrices. It demonstrates that a Type-II circulant-like matrix of order $2n$ can never be NMDS when $n$ is even, and a Hadamard matrix must be nonsingular to be an NMDS matrix. It also proves that there exist only two orthogonal Type-I circulant-like NMDS matrices over $\mathbb{F}_{2^r}$. Additionally, the paper derives a formula for counting $4 \times 4$ Hadamard NMDS and involutory Hadamard NMDS matrices with exactly one zero per row or column over $\mathbb{F}_{2^r}$. It also presents a general form for constructing $2n \times 2n$ involutory MDS matrices and uses this to compute the number of both order $2$ MDS and order $2$ involutory MDS matrices in $\mathbb{F}_{2^r}$. Based on these findings, an upper bound is proposed for the total number of involutory MDS matrices of order $4$ over $\mathbb{F}_{2^r}$. However, this bound is very loose, leaving room for future work to either refine this estimate or determine the precise count of $4 \times 4$ involutory MDS matrices.

\subsubsection*{Acknowledgments} 
We would like to thank Lukas Koelsch for providing a valuable insight that contributed toward the proof of Theorem~\ref{Thm_Inv_Hadamard_MDS_Counting}. We are also grateful to Sandip Kumar Mondal for his valuable discussions on various aspects of an earlier version of this paper.

\medskip

\bibliographystyle{plain}
\bibliography{Bibliography}

\clearpage

\appendix  
\section{The $4\times 4$ Hadamard MDS matrices over $\mathbb{F}_{2^r}$}\label{Sec:Appendix}

\renewcommand{\arraystretch}{1.5} 
\begin{table}[h!]
	\centering
	\begin{adjustbox}{max width=\textwidth}
	\begin{tabular}{|c|c|c|c|c|c|c|c|}
	\hline
	$(1, \alpha, \alpha^2, \alpha^5)$ & $(1, \alpha, \alpha^4, \alpha^6)$ & $(1, \alpha, \alpha^5, \alpha^2)$ & $(1, \alpha, \alpha^6, \alpha^4)$ & $(1, \alpha^2, \alpha, \alpha^5)$ & $(1, \alpha^2, \alpha^3, \alpha^4)$ & $(1, \alpha^2, \alpha^4, \alpha^3)$ & $(1, \alpha^2, \alpha^5, \alpha)$ \\
	$(1, \alpha^3, \alpha^2, \alpha^4)$ & $(1, \alpha^3, \alpha^4, \alpha^2)$ & $(1, \alpha^3, \alpha^5, \alpha^6)$ & $(1, \alpha^3, \alpha^6, \alpha^5)$ & $(1, \alpha^4, \alpha, \alpha^6)$ & $(1, \alpha^4, \alpha^2, \alpha^3)$ & $(1, \alpha^4, \alpha^3, \alpha^2)$ & $(1, \alpha^4, \alpha^6, \alpha)$ \\
	$(1, \alpha^5, \alpha, \alpha^2)$ & $(1, \alpha^5, \alpha^2, \alpha)$ & $(1, \alpha^5, \alpha^3, \alpha^6)$ & $(1, \alpha^5, \alpha^6, \alpha^3)$ & $(1, \alpha^6, \alpha, \alpha^4)$ & $(1, \alpha^6, \alpha^3, \alpha^5)$ & $(1, \alpha^6, \alpha^4, \alpha)$ & $(1, \alpha^6, \alpha^5, \alpha^3)$ \\
	\hline
	$(\alpha, 1, \alpha^2, \alpha^5)$ & $(\alpha, 1, \alpha^4, \alpha^6)$ & $(\alpha, 1, \alpha^5, \alpha^2)$ & $(\alpha, 1, \alpha^6, \alpha^4)$ & $(\alpha, \alpha^2, 1, \alpha^5)$ & $(\alpha, \alpha^2, \alpha^3, \alpha^6)$ & $(\alpha, \alpha^2, \alpha^5, 1)$ & $(\alpha, \alpha^2, \alpha^6, \alpha^3)$ \\
	$(\alpha, \alpha^3, \alpha^2, \alpha^6)$ & $(\alpha, \alpha^3, \alpha^4, \alpha^5)$ & $(\alpha, \alpha^3, \alpha^5, \alpha^4)$ & $(\alpha, \alpha^3, \alpha^6, \alpha^2)$ & $(\alpha, \alpha^4, 1, \alpha^6)$ & $(\alpha, \alpha^4, \alpha^3, \alpha^5)$ & $(\alpha, \alpha^4, \alpha^5, \alpha^3)$ & $(\alpha, \alpha^4, \alpha^6, 1)$ \\
	$(\alpha, \alpha^5, 1, \alpha^2)$ & $(\alpha, \alpha^5, \alpha^2, 1)$ & $(\alpha, \alpha^5, \alpha^3, \alpha^4)$ & $(\alpha, \alpha^5, \alpha^4, \alpha^3)$ & $(\alpha, \alpha^6, 1, \alpha^4)$ & $(\alpha, \alpha^6, \alpha^2, \alpha^3)$ & $(\alpha, \alpha^6, \alpha^3, \alpha^2)$ & $(\alpha, \alpha^6, \alpha^4, 1)$ \\
	\hline
	$(\alpha^2, 1, \alpha, \alpha^5)$ & $(\alpha^2, 1, \alpha^3, \alpha^4)$ & $(\alpha^2, 1, \alpha^4, \alpha^3)$ & $(\alpha^2, 1, \alpha^5, \alpha)$ & $(\alpha^2, \alpha, 1, \alpha^5)$ & $(\alpha^2, \alpha, \alpha^3, \alpha^6)$ & $(\alpha^2, \alpha, \alpha^5, 1)$ & $(\alpha^2, \alpha, \alpha^6, \alpha^3)$ \\
	$(\alpha^2, \alpha^3, 1, \alpha^4)$ & $(\alpha^2, \alpha^3, \alpha, \alpha^6)$ & $(\alpha^2, \alpha^3, \alpha^4, 1)$ & $(\alpha^2, \alpha^3, \alpha^6, \alpha)$ & $(\alpha^2, \alpha^4, 1, \alpha^3)$ & $(\alpha^2, \alpha^4, \alpha^3, 1)$ & $(\alpha^2, \alpha^4, \alpha^5, \alpha^6)$ & $(\alpha^2, \alpha^4, \alpha^6, \alpha^5)$ \\
	$(\alpha^2, \alpha^5, 1, \alpha)$ & $(\alpha^2, \alpha^5, \alpha, 1)$ & $(\alpha^2, \alpha^5, \alpha^4, \alpha^6)$ & $(\alpha^2, \alpha^5, \alpha^6, \alpha^4)$ & $(\alpha^2, \alpha^6, \alpha, \alpha^3)$ & $(\alpha^2, \alpha^6, \alpha^3, \alpha)$ & $(\alpha^2, \alpha^6, \alpha^4, \alpha^5)$ & $(\alpha^2, \alpha^6, \alpha^5, \alpha^4)$ \\
	\hline
	$(\alpha^3, 1, \alpha^2, \alpha^4)$ & $(\alpha^3, 1, \alpha^4, \alpha^2)$ & $(\alpha^3, 1, \alpha^5, \alpha^6)$ & $(\alpha^3, 1, \alpha^6, \alpha^5)$ & $(\alpha^3, \alpha, \alpha^2, \alpha^6)$ & $(\alpha^3, \alpha, \alpha^4, \alpha^5)$ & $(\alpha^3, \alpha, \alpha^5, \alpha^4)$ & $(\alpha^3, \alpha, \alpha^6, \alpha^2)$ \\
	$(\alpha^3, \alpha^2, 1, \alpha^4)$ & $(\alpha^3, \alpha^2, \alpha, \alpha^6)$ & $(\alpha^3, \alpha^2, \alpha^4, 1)$ & $(\alpha^3, \alpha^2, \alpha^6, \alpha)$ & $(\alpha^3, \alpha^4, 1, \alpha^2)$ & $(\alpha^3, \alpha^4, \alpha, \alpha^5)$ & $(\alpha^3, \alpha^4, \alpha^2, 1)$ & $(\alpha^3, \alpha^4, \alpha^5, \alpha)$ \\
	$(\alpha^3, \alpha^5, 1, \alpha^6)$ & $(\alpha^3, \alpha^5, \alpha, \alpha^4)$ & $(\alpha^3, \alpha^5, \alpha^4, \alpha)$ & $(\alpha^3, \alpha^5, \alpha^6, 1)$ & $(\alpha^3, \alpha^6, 1, \alpha^5)$ & $(\alpha^3, \alpha^6, \alpha, \alpha^2)$ & $(\alpha^3, \alpha^6, \alpha^2, \alpha)$ & $(\alpha^3, \alpha^6, \alpha^5, 1)$ \\
	\hline
	$(\alpha^4, 1, \alpha, \alpha^6)$ & $(\alpha^4, 1, \alpha^2, \alpha^3)$ & $(\alpha^4, 1, \alpha^3, \alpha^2)$ & $(\alpha^4, 1, \alpha^6, \alpha)$ & $(\alpha^4, \alpha, 1, \alpha^6)$ & $(\alpha^4, \alpha, \alpha^3, \alpha^5)$ & $(\alpha^4, \alpha, \alpha^5, \alpha^3)$ & $(\alpha^4, \alpha, \alpha^6, 1)$ \\
	$(\alpha^4, \alpha^2, 1, \alpha^3)$ & $(\alpha^4, \alpha^2, \alpha^3, 1)$ & $(\alpha^4, \alpha^2, \alpha^5, \alpha^6)$ & $(\alpha^4, \alpha^2, \alpha^6, \alpha^5)$ & $(\alpha^4, \alpha^3, 1, \alpha^2)$ & $(\alpha^4, \alpha^3, \alpha, \alpha^5)$ & $(\alpha^4, \alpha^3, \alpha^2, 1)$ & $(\alpha^4, \alpha^3, \alpha^5, \alpha)$ \\
	$(\alpha^4, \alpha^5, \alpha, \alpha^3)$ & $(\alpha^4, \alpha^5, \alpha^2, \alpha^6)$ & $(\alpha^4, \alpha^5, \alpha^3, \alpha)$ & $(\alpha^4, \alpha^5, \alpha^6, \alpha^2)$ & $(\alpha^4, \alpha^6, 1, \alpha)$ & $(\alpha^4, \alpha^6, \alpha, 1)$ & $(\alpha^4, \alpha^6, \alpha^2, \alpha^5)$ & $(\alpha^4, \alpha^6, \alpha^5, \alpha^2)$ \\
	\hline
	$(\alpha^5, 1, \alpha, \alpha^2)$ & $(\alpha^5, 1, \alpha^2, \alpha)$ & $(\alpha^5, 1, \alpha^3, \alpha^6)$ & $(\alpha^5, 1, \alpha^6, \alpha^3)$ & $(\alpha^5, \alpha, 1, \alpha^2)$ & $(\alpha^5, \alpha, \alpha^2, 1)$ & $(\alpha^5, \alpha, \alpha^3, \alpha^4)$ & $(\alpha^5, \alpha, \alpha^4, \alpha^3)$ \\
	$(\alpha^5, \alpha^2, 1, \alpha)$ & $(\alpha^5, \alpha^2, \alpha, 1)$ & $(\alpha^5, \alpha^2, \alpha^4, \alpha^6)$ & $(\alpha^5, \alpha^2, \alpha^6, \alpha^4)$ & $(\alpha^5, \alpha^3, 1, \alpha^6)$ & $(\alpha^5, \alpha^3, \alpha, \alpha^4)$ & $(\alpha^5, \alpha^3, \alpha^4, \alpha)$ & $(\alpha^5, \alpha^3, \alpha^6, 1)$ \\
	$(\alpha^5, \alpha^4, \alpha, \alpha^3)$ & $(\alpha^5, \alpha^4, \alpha^2, \alpha^6)$ & $(\alpha^5, \alpha^4, \alpha^3, \alpha)$ & $(\alpha^5, \alpha^4, \alpha^6, \alpha^2)$ & $(\alpha^5, \alpha^6, 1, \alpha^3)$ & $(\alpha^5, \alpha^6, \alpha^2, \alpha^4)$ & $(\alpha^5, \alpha^6, \alpha^3, 1)$ & $(\alpha^5, \alpha^6, \alpha^4, \alpha^2)$ \\
	\hline
	$(\alpha^6, 1, \alpha, \alpha^4)$ & $(\alpha^6, 1, \alpha^3, \alpha^5)$ & $(\alpha^6, 1, \alpha^4, \alpha)$ & $(\alpha^6, 1, \alpha^5, \alpha^3)$ & $(\alpha^6, \alpha, 1, \alpha^4)$ & $(\alpha^6, \alpha, \alpha^2, \alpha^3)$ & $(\alpha^6, \alpha, \alpha^3, \alpha^2)$ & $(\alpha^6, \alpha, \alpha^4, 1)$ \\
	$(\alpha^6, \alpha^2, \alpha, \alpha^3)$ & $(\alpha^6, \alpha^2, \alpha^3, \alpha)$ & $(\alpha^6, \alpha^2, \alpha^4, \alpha^5)$ & $(\alpha^6, \alpha^2, \alpha^5, \alpha^4)$ & $(\alpha^6, \alpha^3, 1, \alpha^5)$ & $(\alpha^6, \alpha^3, \alpha, \alpha^2)$ & $(\alpha^6, \alpha^3, \alpha^2, \alpha)$ & $(\alpha^6, \alpha^3, \alpha^5, 1)$ \\
	$(\alpha^6, \alpha^4, 1, \alpha)$ & $(\alpha^6, \alpha^4, \alpha, 1)$ & $(\alpha^6, \alpha^4, \alpha^2, \alpha^5)$ & $(\alpha^6, \alpha^4, \alpha^5, \alpha^2)$ & $(\alpha^6, \alpha^5, 1, \alpha^3)$ & $(\alpha^6, \alpha^5, \alpha^2, \alpha^4)$ & $(\alpha^6, \alpha^5, \alpha^3, 1)$ & $(\alpha^6, \alpha^5, \alpha^4, \alpha^2)$ \\
	\hline
	\end{tabular}
	\end{adjustbox}
	\caption{List of $4\times 4$ Hadamard MDS matrices over $\mathbb{F}_{2^3}$. Each tuple represents the first row of a Hadamard matrix, where $\alpha$ denotes a primitive element of $\mathbb{F}_{2^3}$ and is a root of the polynomial $x^3 + x^2 + 1$.}
\end{table}

\begin{table}[h!]
	\centering
	\begin{adjustbox}{max width=\textwidth}
	\begin{tabular}{|c|c|c|c|c|c|c|c|}
	\hline
	$(1, \alpha^3, \alpha^5, \alpha^6)$ & $(1, \alpha^3, \alpha^6, \alpha^5)$ & $(1, \alpha^5, \alpha^3, \alpha^6)$ & $(1, \alpha^5, \alpha^6, \alpha^3)$ & $(1, \alpha^6, \alpha^3, \alpha^5)$ & $(1, \alpha^6, \alpha^5, \alpha^3)$ & $(\alpha^3, 1, \alpha^5, \alpha^6)$ & $(\alpha^3, 1, \alpha^6, \alpha^5)$ \\
	\hline
	$(\alpha^3, \alpha^5, 1, \alpha^6)$ & $(\alpha^3, \alpha^5, \alpha^6, 1)$ & $(\alpha^3, \alpha^6, 1, \alpha^5)$ & $(\alpha^3, \alpha^6, \alpha^5, 1)$ & $(\alpha^5, 1, \alpha^3, \alpha^6)$ & $(\alpha^5, 1, \alpha^6, \alpha^3)$ & $(\alpha^5, \alpha^3, 1, \alpha^6)$ & $(\alpha^5, \alpha^3, \alpha^6, 1)$ \\
	\hline
	$(\alpha^5, \alpha^6, 1, \alpha^3)$ & $(\alpha^5, \alpha^6, \alpha^3, 1)$ & $(\alpha^6, 1, \alpha^3, \alpha^5)$ & $(\alpha^6, 1, \alpha^5, \alpha^3)$ & $(\alpha^6, \alpha^3, 1, \alpha^5)$ & $(\alpha^6, \alpha^3, \alpha^5, 1)$ & $(\alpha^6,\alpha^5,1,\alpha^3)$ & $(\alpha^6,\alpha^5,\alpha^3,1)$ \\
	\hline
	\end{tabular}
	\end{adjustbox}
	\caption{List of $4\times 4$ involutory Hadamard MDS matrices over $\mathbb{F}_{2^3}$. Each tuple represents the first row of a Hadamard matrix, where $\alpha$ denotes a primitive element of $\mathbb{F}_{2^3}$ and is a root of the polynomial $x^3 + x^2 + 1$.}
\end{table}

\end{document}